\newtheorem{definition}{Definition}
\newtheorem{theorem}{Theorem}
\newtheorem{lemma}{Lemma}
\newtheorem{corollary}{Corollary}
\newtheorem{property}{Property}
\newcommand{\CR}{{{\mathscr{C}}_{\mathcal{R}}}}
\newcommand{\CL}{{{\mathscr{C}}_{\ell_1}}}
\newcommand{\AR}{{{\mathscr{A}}_{\mathcal{R}}}}
\newcommand*{\cD}{\mathcal{D}}
\newcommand*{\cE}{\mathcal{E}}
\newcommand*{\cH}{\mathscr{H}}
\newcommand*{\cL}{\mathcal{L}}
\newcommand*{\cU}{\mathcal{U}}
\newcommand{\bC}{\mathbb{C}}
\newcommand*{\ket}[1]{|#1\rangle}
\newcommand*{\bra}[1]{\langle #1|}
\newcommand*{\braket}[2]{\langle #1| #2 \rangle}
\newcommand*{\proj}[1]{\ket{#1}\!\bra{#1}}
\newcommand*{\Tr}{\mathrm{Tr}}
\newcommand*{\tr}{\mathrm{Tr}}
\newcommand{\beq}{\begin{equation}}
\newcommand{\eeq}{\end{equation}}
\newcommand{\I}{\openone}
\begin{document}

\title{Robustness of asymmetry and coherence of quantum states}

\author{Marco~Piani}
\email{marco.piani@strath.ac.uk}
\affiliation{SUPA and Department of Physics, University of Strathclyde, Glasgow G4 0NG, UK}

\author{Marco~Cianciaruso}
\email{cianciaruso.marco@gmail.com}
\affiliation{Dipartimento di Fisica ``E. R. Caianiello'', Universit\`a degli Studi di Salerno, Via Giovanni Paolo II, I-84084 Fisciano (SA), Italy; and INFN Sezione di Napoli, Gruppo Collegato di Salerno, Italy}
\affiliation{School of Mathematical Sciences, The University of Nottingham, University Park, Nottingham NG7 2RD, UK}

\author{Thomas~R.~Bromley}
\email{thomas.r.bromley@gmail.com}
\affiliation{School of Mathematical Sciences, The University of Nottingham, University Park, Nottingham NG7 2RD, UK}

\author{Carmine~Napoli}
\email{c.napoli15@studenti.unisa.it}
\affiliation{Dipartimento di Fisica ``E. R. Caianiello'', Universit\`a degli Studi di Salerno, Via Giovanni Paolo II, I-84084 Fisciano (SA), Italy; and INFN Sezione di Napoli, Gruppo Collegato di Salerno, Italy}
\affiliation{School of Mathematical Sciences, The University of Nottingham, University Park, Nottingham NG7 2RD, UK}

\author{Nathaniel~Johnston}
\email{nathaniel@njohnston.ca}
\affiliation{Department of Mathematics and Computer Science, Mount Allison University, Sackville, New Brunswick E4L 1E2, Canada}

\author{Gerardo~Adesso}
\email{gerardo.adesso@nottingham.ac.uk}
\affiliation{School of Mathematical Sciences, The University of Nottingham, University Park, Nottingham NG7 2RD, UK}

\begin{abstract}
Quantum states may exhibit asymmetry with respect to the action of a given group. Such an asymmetry of states can be considered as a resource in applications such as quantum metrology, and it is a concept that encompasses quantum coherence as a special case. We introduce explicitly and study the robustness of asymmetry, a quantifier of asymmetry of states that we prove to have many attractive properties, including efficient numerical computability via semidefinite programming, and an operational interpretation in a channel discrimination context. We also introduce the notion of asymmetry witnesses, whose measurement in a laboratory detects the presence of asymmetry. We prove that properly constrained asymmetry witnesses provide lower bounds to the robustness of asymmetry, which is  shown to be a directly measurable quantity itself. We then focus our attention on coherence witnesses and the robustness of coherence, for which we prove a number of additional results; these include an analysis of its specific relevance in phase discrimination and quantum metrology, an analytical calculation of its value for a relevant class of quantum states, and tight bounds that relate it to another previously defined coherence monotone.
\end{abstract}

\pacs{03.65.Ud, 03.67.Bg, 03.67.Ac, 03.65.Ta}

\date{March 1, 2016}
\maketitle


\section{Introduction}

Symmetry is a central concept in physics, as it imposes constraints and allows simplifications in the study of properties and evolutions of physical systems. It has a vast range of applicability, from particle physics to cosmology, up to its elevation to the status of a principle on which a physical theory can be based~\cite{gross1996role}. Symmetry is defined with respect to the action of a symmetry group. A quantum state, described by a density operator, may or may not be invariant under the action of the group. The extent to which the symmetry is broken by the quantum state constitutes its degree of \emph{asymmetry}. The advent of quantum information processing has fostered the study of asymmetry~\cite{Vaccaro2003,Bartlett2007,Vaccaro2008,Gour2008,Gour2009,Marvian2013,Marvian2014a,Marvian2015}, on one hand because of its potential applications as a resource in quantum communication and estimation tasks, and on the other hand because of the availability of conceptual and technical tools developed in quantum information theory, which can be efficiently borrowed to characterize more rigorously the notions of symmetry and asymmetry.

Quantum metrology is one of the areas of quantum information processing more readily deployable in real-world scenarios, and is drawing a large international effort to exploit effectively quantum features like superposition and entanglement for enhanced sensing technologies~\cite{Giovannetti2011}.  Related to metrology is also the study of quantum reference frames~\cite{Vaccaro2003,Bartlett2007,Vaccaro2008}. The understanding of quantum reference frames and of their manipulation is instrumental in harnessing the advantages promised by quantum communication, and fundamental in building a fully consistent quantum picture of nature, overcoming the need for the notion of a classical system of reference. The presence and degree of asymmetry in a state of a quantum system allows one to distinguish between the action of different  elements of the group, making such an asymmetry key in quantum communication and quantum metrology~\cite{Giovannetti2011,Hall2012}, and rendering the system at hand a potential reference frame~\cite{Vaccaro2003,Bartlett2007,Vaccaro2008}. Finally, in a context where the physical evolution is constrained by a symmetry, the asymmetry of (the state of) a system can allow one to overcome the limitations imposed by the symmetry group and to perform transformations and measurements on other systems that would otherwise be forbidden, like preparing those systems in states that violate the symmetry constraints. In a wide spectrum of situations, the asymmetry of a state can be seen therefore as a resource, that allows one to perform tasks, be them passive---as detection in metrology---or active---as in the manipulation of quantum systems.

In this article, which is also the companion to~\cite{PRL}, we introduce explicitly and study the {\it robustness of asymmetry}, a measure of asymmetry of quantum states that we prove to have a number of attractive properties. As a tool on the way, but relevant on its own right, we also introduce the notion of asymmetry witness, that is, the general notion of an observable whose measured value provides qualitative and---under suitable constraints---quantitative information about the asymmetry of a state. We then specialize our analysis to the {\it robustness of coherence}, complementing the dedicated study in \cite{PRL}. Although quantum coherence, understood as the superposition of orthogonal ``classical'' states, can be seen as a particular case of asymmetry (namely with respect to the group of time translations generated by a Hamiltonian diagonal in the basis of such ``classical'' states), it deserves in fact particular focus for the following reasons. First,  it can be considered as the most essential quantum feature exhibited by a single system; second, it underpins all forms of quantum correlations in composite systems; third, it can be related to quantum enhancements in diverse instances of quantum information, thermodynamics, condensed matter physics, and life sciences \cite{Aaberg2006,Lloyd2011,Huelga2013,Baumgratz2014,Grosshans2003,Giovannetti2011}. Our work thus directly contributes to advancing the recently initiated program for a rigorous operational  characterization of quantum coherence as a resource~\cite{Aaberg2006,Marvian2013,Baumgratz2014,Levi2014,Marvian2015,Streltsov2015b,Du2015,Hall2015,India2015,Winter2015,MaxCoh,Streltsov2015,PRL}.

The present paper is organized as follows. In Section~\ref{sec:restheoryasymmetry} we recall the notions of symmetric and asymmetric quantum states, respectively non-resources and resources in a resource theoretic approach to asymmetry. In Section~\ref{sec:asymmetry} we study in general the robustness of asymmetry by: defining it (Section~\ref{sec:asymmetrydefinition}); proving its fundamental properties (Section~\ref{sec:asymmetryproperties}); proving that its evaluation can be cast as a semidefinite program (Section~\ref{sec:asymmetrySDP}); introducing the notion of asymmetry witnesses  and providing (observable) bounds (Section~\ref{sec:asymmetrywitnesses}); and providing an operational interpretation in terms of advantage for channel discrimination tasks  (Section~\ref{sec:asymmetrychannel}). In Section~\ref{sec:robusteness_of_coherence} we focus on the robustness of coherence, presenting the details of the results announced in \cite{PRL}, in particular obtaining explicit and analytical bounds on it (Section~\ref{sec:bounds_coherence}), and calculating it analytically for relevant cases (Section~\ref{sec:exact_coherence}).

\section{Resource theory of asymmetry}
\label{sec:restheoryasymmetry}
The notion of {\it asymmetry} with respect to a given representation of a symmetry group $\sf G$ has been recently studied in quantum mechanics adopting the information theoretic paradigm of {\it resource theories} \cite{Gour2008,Gour2009,Marvian2013,Marvian2014a,Marvian2015}.
In general, the overall objective of any resource theory is to understand and formalize the quantification and manipulation of a given physical phenomenon, in order to facilitate its exploitation in the most efficient way \cite{Coecke2014}. This framework can be applied even beyond the domain of physical sciences \cite{Renner2015}.

In quantum mechanics, any resource theory is defined by the (typically convex) set of {\it free states}, and by a set of {\it free operations}, see e.g.~\cite{Coecke2014,Brandao2015}. The free states are states not possessing the resource under consideration, while any non-free state can be defined as a {\it resource state}, or shortly a resource. On the other hand, the free operations are defined so that they are unable to create the resource from free states, that is, they must map the set of free states into (a subset of) itself. Depending on the context and framework, some additional limitations may or may not be taken into account when defining the free operations.
A typical example of a resource theory is the theory of entanglement in composite quantum systems \cite{Plenio1997,Horodecki2009}, where free states are identified as separable (i.e., unentangled) states, and free operations are conventionally taken to be local operations and classical communication (LOCC), which nonetheless form a proper subset of the maximal set of all possible operations mapping separable states into separable states, and even of the set of the so-called separable operations~\cite{bennettnonlocality}.

Once free states and free operations are defined, the main aim of a resource theory resides in the study of the manipulation of the resource by the (chosen) free operations~\cite{Coecke2014,Brandao2015}. We remark that in this paper our concern lies mainly in the quantification of a resource---asymmetry---and not so much in its manipulation. Nonetheless, we do refer to the notion of free operations for a meaningful reason.

One of the merits of a resource theory framework is indeed that it naturally leads to a set of conditions which should be satisfied by any proposed quantifier of the resource. In particular, any valid resource measure should vanish on the set of free states (and is termed faithful if it vanishes only on such set), and should be nonincreasing under the chosen free operations: given that the latter are unable to create resources from free states, they should also be unable to increase the resource content of non-free states. Any resource measure which obeys such a fundamental constraint can be regarded as a {\it resource monotone} \cite{Vidal2000}. Additionally, it is often demanded that a resource measure be convex, i.e.~nonincreasing under mixing, if the set of free states is convex. Once a resource theory is established, therefore, it proves useful to validate any proposal for a resource measure.

This will be precisely the case for the robustness of asymmetry, on which this paper is focused. To proceed, we first recall the main ingredients that define the resource theory of asymmetry \cite{Marvian2013}.

\subsection{Symmetric states as free states}
\label{sec:restheoryasymmetrystates}

Given a Hilbert space $\cH$ and the convex set ${\mathscr{D}}(\cH)$ of density operators acting on it, let us consider a symmetry group ${\sf G}$ with associated unitary representation  $\{U_g\}_{g \in {\sf G}}$ on $\cH$.
Let us define the action of $U_g$ on a state $\xi \in {\mathscr{D}}(\cH)$  in terms of the superoperator $\cU_g$ as follows,
\begin{equation}\label{eq:cU}
\cU_g(\xi) = U_g \xi U_g^\dagger\,.
\end{equation}
A state $\sigma \in {\mathscr{D}}(\cH)$ is defined as {\it symmetric} with respect to $\sf G$ if and only if
\begin{equation}\label{eq:defsym}
\cU_g(\sigma) =\sigma\,,
\end{equation}
for all $g\in {\sf G}$.
Notice that this is equivalent to the condition
$\cE(\sigma)=\sigma$,
with
\begin{equation}\label{eq:cE}
\cE(\xi)=\frac{1}{|{\sf G}|}\sum_{g \in {\sf G}} \cU_g (\xi)
\end{equation}
denoting the average of the action of the group \cite{Gour2009}.

We indicate by
\begin{equation}\label{eq:S}
\mathscr{S} := \{\sigma  \in {\mathscr{D}(\cH)} : \cE(\sigma)=\sigma\}
 \end{equation}
 the set of all symmetric states $\sigma$ according to the above definition. This constitutes the set of free states for the resource theory of asymmetry \cite{Marvian2013}, and it is evidently convex.
Any other state $\rho \in {\mathscr{D}}(\cH)$ is {\it asymmetric} with respect to $\sf G$, that is, is a resource state. Explicitly, $\rho$ is asymmetric if and only if there exists a $g\in {\sf G}$ such that
\begin{equation}\label{eq:defasym}
\cU_g(\rho) \neq \rho\,.
\end{equation}
Equivalently, $\rho$ is asymmetric if and only if $\cE(\rho)\neq\rho$.

\subsection{An example of free operations: covariant operations}
\label{sec:restheoryasymmetryoperations}
As mentioned, we will not be particularly concerned with the manipulation of asymmetry. For this reason, we do not need to be very specific about the class of free operations. Furthermore, the quantity we set out to study, the robustness of asymmetry (and later, more specifically, the robustness of coherence) turns out to be a resource monotone in a very general sense (see Section~\ref{sec:asymmetryproperties} for more details). Nonetheless, for concreteness, we provide an example of free operations which have been adopted for the resource theory of asymmetry. This is the set of {\it covariant operations} with respect to the group $\sf G$ (or, in short, $\sf G$-covariant operations) \cite{Marvian2013}. Any such operation is defined by a superoperator $\cL: \mathscr{D}(\mathscr{H}) \rightarrow \mathscr{D}(\mathscr{H})$ such that,
\begin{equation}\label{eq:Gcov}
\cL \big(\cU_g(\xi)\big)=\cU_g\big(\cL(\xi)\big)\,,\quad \forall \ g \in {\sf G}, \ \xi \in  \mathscr{D}(\mathscr{H})\,.
\end{equation}
Equivalently, any covariant operation is defined by $[\cL, \cU_g]=0$, $\forall \ g \in {\sf G}$.

\section{Robustness of asymmetry}
\label{sec:asymmetry}
In this section we define and investigate a quantifier of the asymmetry of quantum states with respect to a group representation $\{U_g\}_{g \in {\sf G}}$, in compliance with the resource theory formalism introduced in the previous section.

\subsection{Definition}
\label{sec:asymmetrydefinition}

\begin{definition}[\bf Robustness of asymmetry]
Given a state $\rho \in \cD(\cH)$, we define the {\it robustness of asymmetry} (RoA) of $\rho$ as
\begin{equation}\label{eq:RoA}
\AR(\rho) = \min_{\tau \in \mathscr{D}(\cH)} \left\{ s\geq 0\ \Big\vert\ \frac{\rho + s\ \tau}{1+s} =: \sigma \in {\mathscr{S}}\right\}\,,
\end{equation}
that is, as the minimum weight $s$, parametrized as in \eqref{eq:RoA}, of another state $\tau$, such that its normalized convex mixture with $\rho$ results into a symmetric state $\sigma$.
\end{definition}

If we denote by $s^\star$ the value of $s$ achieving the minimum in Eq.~(\ref{eq:RoA}), with corresponding states $\tau^\star$ (a generic state, not necessarily symmetric) and $\sigma^\star$ (a symmetric state), then $\AR(\rho) = s^\star$, and
\begin{equation}\label{eq:pseudo}
\rho = \big(1+\AR(\rho)\big)\sigma^\star - \AR(\rho)\tau^\star
\end{equation}
is said to realize an optimal pseudomixture for $\rho$.

It is immediate to realize that $\AR(\rho)$ can also be characterized as
\beq
\label{eq:RoAequivalent}
\AR(\rho) = \min_{\sigma \in {\mathscr{S}}} \left\{ s\geq 0\ \Big\vert\ \rho \leq (1+ s)\,\sigma \right\}\,.
\eeq
This follows since Eq.~\eqref{eq:pseudo} implies $\rho\leq  \big(1+\AR(\rho)\big)\sigma^\star$, with $\sigma^\star\in {\mathscr{S}}$, which means that $\AR(\rho)$ is lower-bounded by the minimum on the right-hand side of Eq.~\eqref{eq:RoAequivalent}. On the other hand, suppose $\rho \leq (1+ s)\sigma$ for some $\sigma\in{\mathscr{S}}$. Then we can write
\[
\sigma = \frac{\rho+s\tau}{1+s}
\]
with $\tau=\big[(1+s)\sigma-\rho\big]/s$ a valid state. This proves that the minimum in Eq.~\eqref{eq:RoAequivalent} is also an upper bound for $\AR(\rho)$, henceforth we conclude that \eqref{eq:RoAequivalent} holds.

Notice that the robustness of a resource can be defined for any general resource theory~\cite{Brandao2015}. Previously, robustness quantifiers have been studied for entanglement, steering-type correlations, non-locality and even correlations beyond quantum~\cite{Robustness,GenRobustness,Piani2015,GellerPiani}. Nonetheless, up to our knowledge, the notion of robustness of asymmetry has not been explored yet.

\subsection{Properties}
\label{sec:asymmetryproperties}

Here we prove that the RoA satisfies a number of properties which qualify it as a valid {\it asymmetry monotone}. We remark that the properties listed here are valid, with suitable adaptations, for all measures of robustness defined in a resource theoretic context \cite{Brandao2015}, with respect to a convex set of free states (in our case, symmetric states) that is closed under a chosen set of free operations (in our case, e.g., covariant operations). The first such example of a robustness measure was defined for entanglement theory \cite{Robustness,GenRobustness}. Most of the proofs reported here are in fact straightforward translations of those originaly produced for the robustness of entanglement.

\begin{property}
\label{A1}
The RoA is bounded as
\beq
0\leq\AR(\rho)\leq \dim(\cH)-1\,
\eeq
for any $\rho\in\cD(\cH)$. Furthermore the RoA is faithful, that is
\begin{equation}\label{eq:A1}
\AR(\rho) = 0\ \ \Longleftrightarrow\ \ \rho \in \mathscr{S}\,.
\end{equation}
\end{property}
\begin{proof}
That  $\AR(\rho)\geq 0$ and that $\AR(\rho)= 0$ if and only if $\rho\in{\mathscr{S}}$ is evident by definition (\ref{eq:RoA}). Let $d=\dim(\cH)$. The bound $\AR(\rho)\leq d -1$ is proven by considering that the maximally mixed state $\I/d$ is symmetric for any unitary representation on $\cH$, and that
\[
\rho\leq\I=(1+(d-1))\frac{\I}{d}\,,
\]
for every $\rho\in\cD(\cH)$. We get the claim by comparing this with Eq.~\eqref{eq:RoAequivalent}.
\end{proof}

\begin{property}
\label{A2}
Let $\{\Gamma_l\}_{l=1}^{m}$ be an instrument, that is, a collection of $m$ completely positive subchannels, summing up to a completely positive trace preserving channel $\cL(\rho) = \sum_{l=1}^{m} \Gamma_l(\rho)$, such that $\Gamma_l(\sigma)/\tr[\Gamma_l(\sigma)] = \sigma_l \in \mathscr{S}$, $\forall \ l=1,\ldots,m$, and for any $\sigma \in \mathscr{S}$.
Then, the RoA is monotonically nonincreasing on average under $\{\Gamma_l\}_{l=1}^{m}$:
\begin{equation}\label{eq:A2}
\AR(\rho)\geq \sum_l \Tr[\Gamma_l(\rho)] \AR\Big(\frac{\Gamma_l(\rho)}{\Tr[\Gamma_l(\rho)]}\Big)\,.
\end{equation}
\end{property}
\begin{proof}
Let $\tau^\star$ and $\sigma^\star$ denote the (generic and symmetric, respectively) states in the optimal pseudomixture for $\AR(\rho)$ as in Eq.~(\ref{eq:pseudo}), and let us apply the subchannel $\Gamma_l$ on both sides, so that
\[
\Gamma_l(\rho) = \big(1+\AR(\rho)\big)\Gamma_l(\delta^\star) - \AR(\rho)\Gamma_l(\tau^\star)\,.
\]
By defining
\begin{equation*}
\begin{split}
\sigma_l&=\frac1{(1+s_l)}\frac1{p_l}\left(1+  \AR\left(\rho\right)\right) \Gamma_l(\delta^\star),\\
\tau_l&=\frac1{s_l} \frac1{p_l} \AR\left(\rho\right) \Gamma_l(\tau^\star),\\
s_l&= \frac1{p_l} \AR\left(\rho\right)\tr\left[ \Gamma_l(\tau^\star)\right],
\end{split}
\end{equation*}
with $p_l=\tr[\Gamma_l(\rho)]$, we can write
\[\rho_l=\left(1+s_l\right)\sigma_l-s_l\tau_l,
\]
where $\rho_l=\Gamma_l(\rho)/p_l$.
Since the latter pseudomixture for each $\rho_l$ is not necessarily optimal, it follows by Eq.~\eqref{eq:RoAequivalent}  that
\[\AR\left(\rho_l\right)\le s_l.\]
Taking the weighted average over all subchannels, and recalling that $\sum_l \tr[\Gamma_l(\xi)]=1$ for any state $\xi$, we finally get
\[
\sum_l p_l \AR\left(\frac{\Gamma_l(\rho)}{p_l}\right) \leq \sum_l \frac{p_l}{p_l} \AR\left(\rho\right)\tr\left[ \Gamma_l(\tau^\star)\right] = \AR(\rho).
\]
%
\end{proof}
Notice that this property is expressed in very general terms: If one has only one subchannel equal to a channel ($m=1$), then Eq.~(\ref{eq:A2}) proves standard monotonicity under free operations that do not create the resource, e.g., under covariant operations. If on the other hand one identifies each subchannel with a Kraus operator, i.e. $\Gamma_l(\rho) = K_l \rho K_l^\dagger$ with $\sum_{l=1}^m K_l^\dagger K_l = \openone$, then Eq.~(\ref{eq:A2}) proves the stronger monotonicity under selective operations \cite{Plenio1997,Baumgratz2014}.

\begin{property}
\label{A3}
The RoA is convex, that is
\begin{equation}
\label{eq:A3}
\AR\big(p \rho_1 + (1-p) \rho_2\big) \leq  p \AR(\rho_1) + (1-p) \AR(\rho_2)\,,
\end{equation}
for any probability $p\in [0,1]$, and any states $\rho_1, \rho_2 \in \mathscr{D}(\mathscr{H})$.
\end{property}
\begin{proof}
Let $\rho_1$ and $\rho_2$ be two states, and consider for each the optimal pseudomixture as in Eq.~(\ref{eq:pseudo}),
\[\rho_k = \big(1+\AR(\rho_k)\big)\delta_k^\star - \AR(\rho_k) \tau_k^\star\,,\]
 with $k=1,2$.
 Take now the convex combination
 \[
 \rho = p \rho_1 + (1-p) \rho_2\,,\]
 with $p \in [0,1]$, and notice that a nonoptimal pseudomixture of the form $\rho = (1+s) \sigma - s \tau$ can be written, with
\begin{equation*}
\begin{split}
\sigma &=\frac{1}{1+s}\left[p\big(1+\AR(\rho_1)\big)\delta^\star_1 + (1-p)\big(1+\AR(\rho_2)\big)\delta^\star_2\right],\\
\tau&= \frac{1}{s}\left[p \AR(\rho_1) \tau^\star_1 + (1-p) \AR(\rho_2) \tau^\star_2\right],\\
s&=p \AR(\rho_1) + (1-p) \AR(\rho_2),
\end{split}
\end{equation*}
By definition, the optimal pseudomixture for $\rho$ in the definition of the RoA will have $\AR(\rho)=s^{\star} \leq s$, which proves Eq.~(\ref{eq:A3}).
\end{proof}



\subsection{Robustness of asymmetry as a semidefinite program}
\label{sec:asymmetrySDP}

We now show that the evaluation of the RoA can be recast as a semidefinite program (SDP)~\cite{vandenberghe1996semidefinite}. In the Supplemental Material \cite{epapsA} we provide a MATLAB~\cite{MATLAB:2015} code to evaluate such an SDP for any input state $\rho$ and any group representation $\{U_g\}$, using the open-source MATLAB-based modeling system for convex optimization CVX~\cite{cvx,gb08}.

\begin{theorem}
\label{thm:main}
The RoA $\AR(\rho)$ corresponds to the SDP
\beq
\label{eq:SDProbustenssprimal}
\begin{aligned}
\min&{} 		&  \quad&\Tr[\tilde{\sigma}]-1\\
\textup{s.t.}&{} 	&	 &\tilde{\sigma}\geq \rho, \\
{}&{}			& &\cE(\tilde{\sigma})=\tilde{\sigma}.
\end{aligned}
\eeq
Strong duality holds, and the RoA can be equivalently calculated via the dual SDP
\begin{equation}\label{eq:SDP}
\begin{aligned}
\max&{} 		&  \quad&-\Tr[W\rho]\\
\textup{s.t.}&{} 	&	 &W\leq\I,\\
{}&{}			& &\cE(W)\geq0,
\end{aligned}
\end{equation}
where $W$ is a Hermitian operator on $\cH$, and the  SDP constraint in the last line of \eqref{eq:SDP} can be restricted to $\cE(W)=0$, that is, the achieved maximum is the same in both cases.
\end{theorem}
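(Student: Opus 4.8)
The plan is to prove the statement in three stages: recast the equivalent form \eqref{eq:RoAequivalent} of the RoA as the primal SDP \eqref{eq:SDProbustenssprimal}; derive the dual \eqref{eq:SDP} by Lagrangian duality and establish strong duality via Slater's condition; and show that the relaxed constraint $\cE(W)\geq 0$ can be tightened to $\cE(W)=0$ with no change in the optimum.

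First I would establish the primal form. Starting from \eqref{eq:RoAequivalent}, I perform the change of variable $\tilde{\sigma}=(1+s)\sigma$ with $\sigma\in\mathscr{S}$: the constraint $\rho\leq(1+s)\sigma$ becomes $\tilde{\sigma}\geq\rho$; linearity of $\cE$ together with $\cE(\sigma)=\sigma$ gives $\cE(\tilde{\sigma})=\tilde{\sigma}$; and normalization of $\sigma$ makes the objective $\Tr[\tilde{\sigma}]-1=s$. Conversely, any $\tilde{\sigma}$ feasible for \eqref{eq:SDProbustenssprimal} is automatically positive, since $\tilde{\sigma}\geq\rho\geq 0$, and obeys $\Tr[\tilde{\sigma}]\geq\Tr[\rho]=1$; hence $s=\Tr[\tilde{\sigma}]-1\geq 0$ and $\sigma=\tilde{\sigma}/\Tr[\tilde{\sigma}]\in\mathscr{S}$ give a feasible point of \eqref{eq:RoAequivalent} with the same value. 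The two minimizations therefore coincide.

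Next I would obtain the dual. The two structural facts I rely on are that $\cE$ is self-adjoint in the Hilbert--Schmidt inner product (the adjoint of $\cU_g$ is $\cU_{g^{-1}}$, and the group sum is invariant under $g\mapsto g^{-1}$) and idempotent, $\cE\circ\cE=\cE$ (re-averaging a symmetric operator does nothing), so $\cE$ is the orthogonal projector onto symmetric operators. Introducing a positive semidefinite multiplier for $\tilde{\sigma}\geq\rho$ and a Hermitian multiplier $Y$ for $\cE(\tilde{\sigma})=\tilde{\sigma}$, I form the Lagrangian and use self-adjointness of $\cE$ to collect all terms linear in $\tilde{\sigma}$; finiteness of the infimum over Hermitian $\tilde{\sigma}$ forces the corresponding coefficient to vanish. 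After the relabeling $W\mapsto\I-W$ this produces exactly the objective $-\Tr[W\rho]$, the constraint $W\leq\I$, and --- using idempotency --- $\cE(W)=0$. Strong duality then follows from Slater's condition, since $\tilde{\sigma}=2\,\I$ is strictly primal feasible: $\rho\leq\I$ gives $2\I-\rho>0$, while $\cE(2\I)=2\I$.

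Finally, to recover the stated dual \eqref{eq:SDP} with $\cE(W)\geq 0$ and its equivalence to the tightened version, I would verify weak duality directly for the relaxed program: for primal-feasible $\tilde{\sigma}$ and any Hermitian $W$ with $W\leq\I$ and $\cE(W)\geq 0$, positivity of $\I-W$ and of $\tilde{\sigma}-\rho$ yields $\Tr[(\I-W)(\tilde{\sigma}-\rho)]\geq 0$, which rearranges into $\Tr[\tilde{\sigma}]-1+\Tr[W\rho]\geq\Tr[W\tilde{\sigma}]=\Tr[\cE(W)\tilde{\sigma}]\geq 0$, using $\cE(\tilde{\sigma})=\tilde{\sigma}$ with self-adjointness and then $\cE(W),\tilde{\sigma}\geq 0$. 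Thus the relaxed dual value is at most $\AR(\rho)$, while the $\cE(W)=0$ dual already attains $\AR(\rho)$; since the set $\{W:\cE(W)=0\}$ sits inside $\{W:\cE(W)\geq 0\}$, enlarging it cannot lower the maximum, so all three values coincide. Equivalently, from any optimal $W^\star$ one verifies that $W'=W^\star-\cE(W^\star)$ satisfies $\cE(W')=0$, $W'\leq W^\star\leq\I$, and $-\Tr[W'\rho]=-\Tr[W^\star\rho]+\Tr[\cE(W^\star)\rho]\geq-\Tr[W^\star\rho]$, so tightening is lossless. The main obstacle I anticipate is the sign and adjoint bookkeeping in the Lagrangian step --- correctly using self-adjointness of $\cE$ to read off the stationarity condition and performing the relabeling that turns the multiplier constraints into $W\leq\I$ and $\cE(W)=0$; idempotency of $\cE$ is the ingredient that collapses $\cE(W)\geq 0$ onto $\cE(W)=0$, and checking Slater's condition is the routine step that upgrades weak to strong duality.
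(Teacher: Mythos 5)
Your proof is correct, and it shares the paper's overall skeleton: recast the RoA as the primal SDP \eqref{eq:SDProbustenssprimal}, compute a dual, obtain strong duality from a strictly feasible primal point (your $2\I$ plays the same role as the paper's $(1+\epsilon)\I$), and use idempotency of $\cE$ together with the map $W\mapsto W-\cE(W)$ to equate the $\cE(W)\geq 0$ and $\cE(W)=0$ programs. Where you genuinely diverge is in the mechanics of the dual derivation. The paper embeds the primal into a $3\times 3$ block-matrix standard form, splitting the equality constraint $\cE(\tilde\sigma)=\tilde\sigma$ into two opposed inequalities with positive multipliers $Y_2,Y_3$, applies the textbook dual recipe, and only afterwards recombines $\tilde W=Y_2-Y_3$ to reach the intermediate program \eqref{eq:newSDPasymmetry0}, which idempotency then converts into \eqref{eq:newSDPasymmetry1}. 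You instead dualize the equality constraint natively with a single sign-unconstrained Hermitian multiplier, so that Lagrangian stationarity (via self-adjointness of $\cE$) lands directly on \eqref{eq:newSDPasymmetry1}; the intermediate form never appears, and the roles of self-adjointness and idempotency are more transparent. A second difference: to identify the relaxed dual \eqref{eq:SDP} with $\AR(\rho)$, you prove weak duality for the relaxed program from scratch (from $\Tr[(\I-W)(\tilde\sigma-\rho)]\geq 0$ and $\Tr[W\tilde\sigma]=\Tr[\cE(W)\tilde\sigma]\geq 0$) and combine it with the inclusion of feasible sets; this certifies \eqref{eq:SDP} independently, and in fact makes the reverse containment in your relabeling step (that every $W$ with $\cE(W)=0$ arises as $\cE(Y)-Y$, via $Y=-W$) dispensable. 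The paper's two-way feasible-point maps---which you also reproduce as your ``equivalently'' alternative---yield the slightly stronger statement that any optimizer of \eqref{eq:SDP} can be explicitly converted into one satisfying $\cE(W)=0$, which is what makes witnesses with $\cE(W)=0$ sufficient in practice. Both routes are complete.
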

\begin{proof}
By incorporating the factor $(1+s)$ appearing in \eqref{eq:RoAequivalent} into the unnormalized state $\tilde{\sigma}=(1+s)\sigma$,
we can reexpress $\AR(\rho)$ as the SDP \eqref{eq:SDProbustenssprimal}.
It is immediate to check that strong duality holds, since a feasible solution of the primal SDP is $\tilde{\sigma}=(1+\epsilon)\openone$, for $\epsilon>0$.

The SDP can be cast in the standard form~\cite{watrous2009}
\beq
\label{eq:SDProbustenssprimalstandard}
\begin{aligned}
\min&{} 		&  \quad&\Tr[C\tilde{\sigma}]-1\\
\textup{s.t.}&{} 	&	 &\Lambda(\tilde{\sigma})\geq B, \\
{}&{}			& & \tilde{\sigma}\geq 0,
\end{aligned}
\eeq
with
\[
C=\I,\ \Lambda(\tilde\sigma)
=
\begin{pmatrix}
\tilde{\sigma} 			& 			0 					& 0 \\
0 					& \cE(\tilde{\sigma})- \tilde{\sigma}		& 0 \\
0					&			0					&  -\cE(\tilde{\sigma}) + \tilde{\sigma}
\end{pmatrix}, \mbox{ and }
B
=
\begin{pmatrix}
\rho		 			& 			0 					& 0 \\
0 					& 			0					& 0 \\
0					&			0					&  0
\end{pmatrix}.
\]
The dual SDP is then~\cite{watrous2009}
\[
\begin{aligned}
\max&{} 		&  \quad&\Tr[B Y]-1\\
\textup{s.t.}&{} 	&	 &\Lambda^\dagger(Y)\leq C, \\
{}&{}			& & Y \geq 0,
\end{aligned}
\]
where (the asterisks indicate irrelevant submatrices)
\[
Y
=
\begin{pmatrix}
Y_1		 			& 			* 					& * \\
* 					& 			Y_2					& * \\
*					&			*					&  Y_3
\end{pmatrix},
\]
and
\[
\begin{aligned}
\Lambda^\dagger(Y)&= Y_1+(\cE(Y_2)-Y_2) - (\cE(Y_3)-Y_3)\\
&=Y_1+\cE(Y_2-Y_3)-(Y_2-Y_3).
\end{aligned}
\]
The dual SDP then simplifies to
\[
\begin{aligned}
\max&{} 		&  \quad&\Tr[\rho Y_1]-1\\
\textup{s.t.}&{} 	&	 &Y_1+\cE(Y_2-Y_3)-(Y_2-Y_3)\leq \I, \\
{}&{}			& & Y_1,Y_2,Y_3\geq0.
\end{aligned}
\]
Noticing that the target function can only be larger if the first condition is saturated with equality, that is $Y_1 = \I -\cE(Y_2-Y_3)+(Y_2-Y_3)$, and that the latter expression depends  on $Y_2$ and $Y_3$ only through the combination $\tilde{W}=Y_2-Y_3$, which is unrestricted with respect to positivity, we arrive at
\beq
\label{eq:newSDPasymmetry0}
\begin{aligned}
\max&{} 		&  \quad&-\Tr[(\cE(\tilde{W})-\tilde{W})\rho]\\
\textup{s.t.}&{} 	&	 &\cE(\tilde{W})-\tilde{W} \leq \I .\\
\end{aligned}
\eeq
Using the fact that $\cE$ is idempotent (equivalently, a superoperator acting as a projector, that is, $\cE^2(\xi) = \cE(\xi)$ for all states $\xi$), it is easy to see that this is equivalent to
\beq
\label{eq:newSDPasymmetry1}
\begin{aligned}
\max&{} 		&  \quad&-\Tr[W\rho]\\
\textup{s.t.}&{} 	&	 &W \leq \I, \\
		&{}	&	&\cE(W)=0.
\end{aligned}
\eeq
Indeed, let $\tilde{W}$ be such that $\cE(\tilde{W})-\tilde{W} \leq \I$. Then, if one defines $W:=\cE(\tilde{W})-\tilde{W}$, one has $\cE(W)=0$. Thus the value \eqref{eq:newSDPasymmetry0} is a lower bound for the value \eqref{eq:newSDPasymmetry1}. On the other hand, let $W$ be such that $\cE(W)=0$, and define $\tilde{W}=-W$. Then $\cE(\tilde{W})-\tilde{W}=W$. Thus the value \eqref{eq:newSDPasymmetry1} is a lower bound for the value \eqref{eq:newSDPasymmetry0}.

In turn, one sees easily that \eqref{eq:newSDPasymmetry1} is equivalent to (\ref{eq:SDP}), i.e., the optimum in \eqref{eq:SDP} is achieved by a  $W$ that satisfies $\cE(W)=0$. That \eqref{eq:SDP} is an upper bound for \eqref{eq:newSDPasymmetry1}  is obvious. On the other hand, take $W$ such that $\cE(W)\geq 0$, and consider $W'=W-\cE(W)$, which  satisfies by definition $\cE(W')=0$. Since $\cE(W)\geq 0$, one has $-\Tr[W'\rho]\geq -\Tr[W\rho]$  and $W'\leq W \leq \I$. Thus,  \eqref{eq:newSDPasymmetry1} is an upper bound for \eqref{eq:SDP}.\end{proof}

The reason that in Theorem \ref{thm:main} we refer to the SDP \eqref{eq:SDP} rather than \eqref{eq:newSDPasymmetry1} is mostly the fact that the condition $\cE(W)\geq0$ is more robust than the condition $\cE(W)=0$, both numerically and experimentally. That is to say that, for example, directly measuring an observable $W$ satisfying $\cE(W)\geq0$, for the purpose of asymmetry detection and estimation, is experimentally feasible, while meeting the exact condition $\cE(W)=0$ is impossible in practice (although it might be considered feasible if we are content with implementing the condition within error bars). We discuss further the issue of practically measuring or estimating the RoA in Section~\ref{sec:asymmetrywitnesses}.

For later convenience, we also report an alternative form of the dual of the SDP in Eq.~(\ref{eq:newSDPasymmetry1}), rewritten as
\beq
\label{eq:SDPrestated}
\begin{aligned}
\max&{} 		&  \quad&\Tr[X\rho]-1\\
\textup{s.t.}&{} 	&	 &X\geq 0,  \\
{}&{}			& &\cE(X)=\I,
\end{aligned}
\eeq
where we have simply made the substitution $X=\I-W$ for the SDP variable.

\subsection{Asymmetry witnesses and observable lower bounds to the robustness of asymmetry}
\label{sec:asymmetrywitnesses}

Here we follow up from the previous (rather technical) subsection by presenting some insightful physical remarks stemming from Theorem~\ref{thm:main}, and in particular Eq.~\eqref{eq:SDP}.
We first observe that, thanks to the fact that $\Tr[\cE(Y)X]=\Tr[Y\cE(X)]$ for all $X,Y$, the condition $\cE(W)\geq 0$ is equivalent to
\begin{equation}\label{eq:witness}
\Tr[W\sigma]\geq 0\,,\quad \forall \ \sigma \in \mathscr{S}\,.
\end{equation}
This means that any Hermitian operator $W$ such that $\cE(W)\geq 0$ can be regarded as an {\it asymmetry witness}, in analogy with the theory of entanglement witnesses \cite{Horodecki2009}. For any such $W$, finding  $\Tr[W\rho]<0$ implies that the state $\rho$ is asymmetric, that is, a resource.

The SDP formulation in Theorem~\ref{thm:main} further implies that
\begin{equation}\label{eq:lowitness}
\max\{0,\,-\tr[\rho W]\} \leq \AR(\rho)\,,
\end{equation}
for all the asymmetry witnesses $W$ subject to the constraints of Eq.~(\ref{eq:SDP}).
By the same statement, it follows that for any state $\rho$ there exists an optimal (state-dependent) witness $W^\star$ such that the RoA of $\rho$ is exactly {\it observable}   as
\begin{equation}\label{eq:observable}
\AR(\rho) = -\tr[\rho W^\star]\,.
\end{equation}
These observations entail that the RoA can be regarded as an instance of a quantitative asymmetry witness, in analogy to quantitative entanglement witnesses \cite{auden2006,Brandao2005,QWitness1,QWitness2,QWitness3,GuhneToth}.

By employing suboptimal witnesses $W$ in Eq.~(\ref{eq:lowitness}), e.g., tailored on experimental capabilities, one can estimate the RoA from below. We can now readily provide a chain of explicit lower bounds to the RoA of an arbitrary state $\rho$, as follows.

\begin{theorem}
\label{prp:lowerbounda}
For any $\rho \in \mathscr{D}(\mathscr{H})$, it holds that
\beq
\label{eq:asymmetrybound}
\AR(\rho) \geq \frac{\|\rho-\cE(\rho)\|_2^2}{\|\cE(\rho)\|_\infty}\geq\frac{\|\rho-\cE(\rho)\|_2^2}{\|\cE(\rho)\|_2}\geq \|\rho-\cE(\rho)\|_2^2\,,
\eeq
where $\|\xi\|_p$ denotes the Schatten $p$-norm of an operator $\xi$,
\begin{equation}\label{eq:pnorm}
\|\xi\|_p = \Big(\tr \big[|\xi|^p\big]\Big)^{\frac1p}\,,
\end{equation}
with $\|\xi\|_\infty$ amounting to the largest singular value of $\xi$ (also known as operator norm), and $\|\xi\|_2 = \sqrt{\tr [\xi^\dagger \xi]}$ reproducing the Hilbert-Schmidt norm of $\xi$.
\end{theorem}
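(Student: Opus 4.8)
The plan is to exhibit a single explicit feasible point of the dual SDP \eqref{eq:SDP} from Theorem~\ref{thm:main} that already produces the leftmost quantity in \eqref{eq:asymmetrybound}; the remaining two inequalities then reduce to Schatten-norm bookkeeping on the single operator $\cE(\rho)$. Throughout I write $\Delta := \rho - \cE(\rho)$ for the asymmetric part of $\rho$, and recall from Eq.~\eqref{eq:lowitness} that every Hermitian $W$ obeying $W \le \I$ and $\cE(W)\ge 0$ supplies the lower bound $\AR(\rho)\ge -\Tr[W\rho]$.

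First I would propose the witness
\begin{equation*}
W := -\frac{\rho-\cE(\rho)}{\|\cE(\rho)\|_\infty} = -\frac{\Delta}{\|\cE(\rho)\|_\infty},
\end{equation*}
which is well defined because $\cE(\rho)$ is a nonzero state, so $\|\cE(\rho)\|_\infty>0$ (if $\rho$ is already symmetric then $\Delta=0$ and \eqref{eq:asymmetrybound} is trivial), and verify that it is dual-feasible. The constraint $\cE(W)\ge 0$ holds with equality, since $\cE$ is idempotent and $\cE(\cE(\rho))=\cE(\rho)$, giving $\cE(\Delta)=\cE(\rho)-\cE(\rho)=0$. For $W\le\I$ I would rewrite $\I-W = \I + \Delta/\|\cE(\rho)\|_\infty$ and use that $\cE(\rho)$, being a density operator, satisfies $\cE(\rho)\le\|\cE(\rho)\|_\infty\,\I$; combined with $\rho\ge 0$ this yields $\I - \cE(\rho)/\|\cE(\rho)\|_\infty + \rho/\|\cE(\rho)\|_\infty \ge 0$, which is exactly $\I-W\ge 0$. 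The positivity of $\rho$ is what lets the $\rho$-term be discarded here, and this is the one place where the choice of normalizing constant $\|\cE(\rho)\|_\infty$ is pinned down.

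Next I would evaluate the resulting bound. The key identity is $\Tr[\rho\,\Delta]=\|\Delta\|_2^2$: because $\cE$ is a self-adjoint idempotent with respect to the Hilbert--Schmidt inner product (an orthogonal projector), the range element $\cE(\rho)$ is orthogonal to $\Delta=(\I-\cE)(\rho)$, so $\Tr[\cE(\rho)\Delta]=0$ and hence $\Tr[\rho\,\Delta]=\Tr[\cE(\rho)\Delta]+\Tr[\Delta^2]=\|\Delta\|_2^2$. Therefore $-\Tr[W\rho]=\|\Delta\|_2^2/\|\cE(\rho)\|_\infty$, which establishes the first inequality $\AR(\rho)\ge \|\rho-\cE(\rho)\|_2^2/\|\cE(\rho)\|_\infty$.

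Finally the last two inequalities are norm comparisons with the common, nonnegative numerator $\|\Delta\|_2^2$ held fixed. I would use $\|\cE(\rho)\|_\infty\le\|\cE(\rho)\|_2$ (the largest singular value never exceeds the Hilbert--Schmidt norm) to pass from the first to the second denominator, and $\|\cE(\rho)\|_2\le 1$ (since $\cE(\rho)$ is a state, $\Tr[\cE(\rho)^2]\le\Tr[\cE(\rho)]=1$) to pass from the second to the trivial denominator; dividing the fixed numerator by these successively larger denominators gives the chain in \eqref{eq:asymmetrybound}. The only genuine obstacle is guessing the right witness and checking $W\le\I$; once $W$ is in hand, every other step is routine.
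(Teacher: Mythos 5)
Your proposal is correct and follows essentially the same route as the paper: your witness $W=-\bigl(\rho-\cE(\rho)\bigr)/\|\cE(\rho)\|_\infty$ is exactly the paper's $W=\bigl(\cE(\rho)-\rho\bigr)/\|\cE(\rho)\|_\infty$, the feasibility check ($\rho\ge 0$ plus $\cE(\rho)\le\|\cE(\rho)\|_\infty\,\I$, and $\cE(W)=0$ by idempotence) is identical, and your orthogonal-projector identity $\Tr[\rho\,\Delta]=\|\Delta\|_2^2$ is the same computation as the paper's $\Tr[\rho\,\cE(\rho)]=\Tr[\cE(\rho)^2]$. The closing norm comparisons ($\|\cdot\|_\infty\le\|\cdot\|_2$ and $\|\cE(\rho)\|_2\le 1$) also match the paper's.
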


\begin{proof}
Notice first that the witness
\begin{equation}\label{eq:witness2}
W=\frac{\cE(\rho) - \rho}{\|\cE(\rho)\|_\infty}
\end{equation}
is by construction an admissible operator in Eq.~\eqref{eq:newSDPasymmetry1}, as
\[\frac{\cE(\rho) - \rho}{\|\cE(\rho)\|_\infty}\leq \frac{\cE(\rho) }{\|\cE(\rho)\|_\infty}\leq \I\]
and
\[\cE(W)=\frac{\cE^2(\rho) - \cE(\rho)}{\|\cE(\rho)\|_\infty}=0.\]
Thus, by Eq.~(\ref{eq:lowitness}), we have
\[
\begin{aligned}
\AR(\rho)
&\geq -\Tr[W\rho]\\
&=-\frac{\Tr[(\cE(\rho)-\rho)\rho]}{\|\cE(\rho)\|_\infty}\\
&=-\frac{\Tr[\cE(\rho)^2]-\Tr[\rho^2]}{\|\cE(\rho)\|_\infty}\\
&=\frac{\Tr[\rho^2]-\Tr[\cE(\rho)^2]}{\|\cE(\rho)\|_\infty}\\
&=\frac{\|\rho-\cE(\rho)\|_2^2}{\|\cE(\rho)\|_\infty},\\
\end{aligned}
\]
having used that $\Tr[\rho\cE(\rho)]=\Tr[\cE(\rho)^2]$.

The second inequality in \eqref{eq:asymmetrybound} is due to the hierarchical relation $\|\xi\|_\infty\leq\|\xi\|_2$ for any operator $\xi$, and the third inequality is due to the fact that $\|\cE(\rho)\|_2\leq\|\cE(\rho)\|_1=\Tr[\rho]=1$.
\end{proof}

We remark that \[\|\rho-\cE(\rho)\|_2^2= \Tr[\rho^2]-\Tr[\cE(\rho)^2]\] and that
\[\|\cE(\rho)\|_2=\sqrt{\Tr[\cE(\rho)^2]}.\]
Both  $\Tr[\rho^2]$ and $\Tr[\cE(\rho)^2]$ are directly measurable, for an unknown $\rho$, as long as one can prepare and perform measurements on  two identical and independent copies $\rho^{\otimes 2}$ of $\rho$.
It is known \cite{ekert2002direct} that one has in fact
\[\Tr[\rho^2]=\Tr[\rho^{\otimes 2} V]\,,\]
and
\[\Tr[\cE(\rho)^2]=\Tr[(\cE(\rho)\otimes\cE(\rho)) V]=\Tr[\rho^{\otimes 2}\cE^{\otimes 2}(V)]\,,\]
where $V$ is the swap operator acting on two copies of the Hilbert space $\mathscr{H}$ of the system, \[
V\ \ket{\psi}\otimes  \ket{\phi} = \ket{\phi} \otimes \ket{\psi}\,,\]
 for any $\ket{\psi}, \ket{\phi}  \in \mathscr{H}$.


Immediate lower bounds to the RoA of a state can also be obtained based on the measurement of \emph{any} set of observables $\{O_i\}$, $i=1,\ldots,k$, conveying the expectation values $o_i=\tr[O_i\rho]$, and not necessarily tailored to the measurement of the RoA. Indeed,
one can consider asymmetry witnesses of the form $W=\sum_{i=1}^k c_i O_i + m\openone$, for $c_1,\ldots,c_k,m \in\mathbb{R}$, and lower bound the RoA by the SDP
(code available~\cite{epapsA})
\begin{subequations}
\label{eq:witnessSDP}
\begin{align}
\textrm{max}\quad&-\left(\sum_{i=1}^k c_i o_i + m\right)\\
\textrm{s.t.}\quad &\sum_{i=1}^k c_i O_i + m\openone \leq \openone,\\
		& \cE\left(\sum_{i=1}^k c_i O_i + m\openone\right)\geq 0.
\end{align}
\end{subequations}
The SDP \eqref{eq:witnessSDP} provides the best possible asymmetry witness based on the available data: a witness that, in a sense, we can at least measure ``on paper'' (or rather, on computer) given the actual measurements performed in the lab.

One can make even better use of available experimental data, by exactly estimating the minimal RoA compatible with the data. This goes  beyond trying to construct the best asymmetry witness out of the data, but, remarkably, can also be cast as an SDP, more precisely as
\begin{subequations}
\label{eq:physicalSDP}
\begin{align}
\textrm{min} \quad&\Tr[\tilde{\sigma}]-1\\
\textrm{s.t.}\quad&\tilde{\sigma}\geq \rho,  \\
				& \cE(\tilde{\sigma})=\tilde{\sigma}, \\
 &  \rho\geq 0, \quad \Tr[\rho]=1,\quad \tr[O_i\rho] = o_i\quad \forall i. \label{eq:physicalconsistentcy}	
\end{align}
\end{subequations}
The SDP \eqref{eq:physicalSDP} is essentially the same as the primal SDP \eqref{eq:SDProbustenssprimal} for the RoA, but it does not assume the knowledge of the underlying state $\rho$ of which we want to know the asymmetry. Instead, the SDP constraints \eqref{eq:physicalconsistentcy} impose the minimal condition that a physical state $\rho$ exists that is compatible with the observed data, in the spirit of \cite{auden2006}.
In general, one cannot think of the estimate of RoA given by  \eqref{eq:physicalSDP} as resulting from calculating the expectation value of just one asymmetry witness that is accessible with the available data. One can easily argue that there is a single witness that would give the same numerical result, but in general we cannot assume that we have measured it, or the ability of directly reconstructing its expectation value from the available data. The  SDP \eqref{eq:physicalSDP} instead exploits the full knowledge about the individual measurements.

Albeit the estimate of the RoA given by \eqref{eq:physicalSDP} is in principle always better than the estimate given by \eqref{eq:witnessSDP}, there are reasons to consider \eqref{eq:witnessSDP} of interest, and potentially prefer it to \eqref{eq:physicalSDP}. One is that, as mentioned, the output of the SDP \eqref{eq:witnessSDP} comprises the best single asymmetry witness that we can have knowledge of based on the data; the knowledge of such a witness can then be used in devising ways to exploit the asymmetry of the state, as in the case where we use it for metrology (see Section~\ref{sec:asymmetrychannel} for an example of this). Another reason deals instead with what could be considered a kind of ``fragility'' of \eqref{eq:physicalSDP}: indeed, the latter SDP also acts as a feasibility test for the compatibility of the measurement results with a physical scenario---the existence of a physical (normalized and positive semidefinite) state that gives rise to the statistics. The issue is that the data collected could be incompatible with a physical state, in the sense of not satisfying \eqref{eq:physicalconsistentcy}, because of experimental errors / imprecisions / statistics, and hence do not lead to any reasonable lower bound to the RoA. Another way of looking at it, is that \eqref{eq:physicalSDP}, while perfectly well defined from an abstract point of view, can only be used in practice when there is some assurance that the data are (or have been processed to be) compatible with some physical state. How to best do this while obtaining a certifiable lower bound to the RoA is beyond the scope of the present work.


\subsection{Robustness of asymmetry as advantage in covariant channel discrimination games}
\label{sec:asymmetrychannel}

In this section we provide a general operational interpretation for the RoA in the context of discriminating  quantum channels.
Given as usual a unitary representation $\{U_g\}$ of a group ${\sf G}$, let us  consider the unitary channels $\mathcal{U}_g$ whose action is defined in Eq.~(\ref{eq:cU}). Suppose we want to discriminate among the set of such channels, which can be applied to an input probe state $\rho$ with an {\it a priori} probability distribution $\{p_g\}_{g \in {\sf G}}$ on the group. We can think of this process as a game, in which a message is encoded on the probe system initialized in $\rho$ by the action of one such channel $\cU_g$, and the aim of the game is to guess correctly which $\cU_g$ was implemented, hence decoding the message. To do so, one needs to measure the output state $\cU_g(\rho)$ after the channel, by means of a positive operator-valued measure (POVM) $\{M_g\}$, with elements satisfying $M_g\geq 0$, $\sum_g M_g =\I$.

For any given measurement strategy, we can define the probability of success $p^{\textup{succ}}$, that is, the probability of guessing correctly in the discrimination (or, equivalently, of decoding the message), as
\begin{equation}
{p^{\textup{succ}}_{\{p_g\},{\{M_g\}}}}(\rho)=\sum_g p_g \Tr[\cU_g(\rho)M_g]\,.
\end{equation}
As indicated, the probability of success depends on the prior probability distribution $\{p_g\}$, on the choice of POVM, and, most importantly, on the probe state $\rho$ on which the information is encoded. We can further define
\begin{equation}\label{eq:psucc}
{p^{\textup{succ}}_{\{p_g\}}}(\rho) = \max_{\{M_g\}} {p^{\textup{succ}}_{\{p_g\},{\{M_g\}}}}(\rho)\,,
\end{equation}
as the optimal probability of success for a given $\rho$ and prior $\{p_g\}$, maximized over all possible POVMs used in the discrimination/decoding.

We can now distinguish the cases where the input probe state is symmetric (a free state) or asymmetric (a resource state).
In the first case, let the probe state be denoted by $\sigma \in \mathscr{S}$. Since by definition $\cU_g(\sigma) = \sigma$, no information is actually encoded in the state during the process. We have then,
\begin{equation}\label{eq:psuccsympre}
\begin{aligned}
p^{\textup{succ}}_{\{p_g\},\{M_g\}}(\sigma)
&=\sum_g p_g \Tr[\cU_g(\sigma)M_g]\\
&=\sum_g p_g \Tr[\sigma M_g]\\
&\leq p_{\sf G}^{\max} \Tr\left[\sigma \sum_g  M_g\right]\\
&=p_{\sf G}^{\max}\,,
\end{aligned}
\end{equation}
where we have defined the maximal a priori probability $p_{\sf G}^{\max}=\max_{g\in {\sf G}} p_g$. The upper bound in Eq.~(\ref{eq:psuccsympre}) can be achieved by a strategy consisting in always guessing the group element $g^{\max}$ with the highest associated prior probability $p_{g^{\max}} \equiv p_{\sf G}^{\max}$, that is, by implementing a POVM with $M_{g^{\max}}=\I$, and $M_g = 0$ $\forall \  g \neq g^{\max}$. Since this is independent of the specific symmetric state $\sigma$, we have then that the optimal probability of success for any symmetric state $\sigma \in \mathscr{S}$, as defined in Eq.~(\ref{eq:psucc}), is given by
\begin{equation}\label{eq:psuccsym}
p^{\textup{succ}}_{\{p_g\}}(\mathscr{S}):=p^{\textup{succ}}_{\{p_g\}}(\sigma) = p_{\sf G}^{\max} = \max_{g\in {\sf G}} p_g\,.
\end{equation}

It is clear that, by using an asymmetric state $\rho$ as a probe, one can expect to achieve in general a higher probability of success than $p^{\textup{succ}}_{\{p_g\}}(\mathscr{S})$ in the above channel discrimination game: in other words, asymmetry is expected to be a useful resource for the considered task. One can then wonder precisely {\it how larger} an optimal success probability can be reached exploiting asymmetric probes, compared to  symmetric probes. We now address this question by showing that it is precisely the RoA of $\rho$ which determines the advantage enabled by choosing $\rho$ as a probe in the above channel discrimination game, as opposed to any symmetric probe $\sigma$. This provides an intuitive and general operational interpretation for the RoA.

\begin{theorem}\label{thm:operational}
For any state $\rho$ and any prior probability distribution $\{p_g\}_{g \in {\sf G}}$ it holds that
\begin{multline}
\label{eq:boundsright}
\max\left\{\frac{1}{|{\sf G}|}(1+\AR(\rho))\,,\,p^{\textup{succ}}_{\{p_g\}}(\mathscr{S})\right\}\\
\leq
p^{\textup{succ}}_{\{p_g\}}(\rho)
\leq \\
(1+\AR(\rho))p^{\textup{succ}}_{\{p_g\}}(\mathscr{S}).
\end{multline}
\end{theorem}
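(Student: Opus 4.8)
The plan is to prove the three inequalities in \eqref{eq:boundsright} separately, exploiting in each case the robustness decomposition and its dual SDP from Theorem~\ref{thm:main}. I would first dispatch the upper bound $p^{\textup{succ}}_{\{p_g\}}(\rho) \leq (1+\AR(\rho))\,p^{\textup{succ}}_{\{p_g\}}(\mathscr{S})$ using the order characterization \eqref{eq:RoAequivalent}. Let $\sigma^\star\in\mathscr{S}$ be optimal, so that $\rho \leq (1+\AR(\rho))\sigma^\star$. Each unitary channel $\cU_g$ preserves operator ordering, hence $\cU_g(\rho) \leq (1+\AR(\rho))\,\cU_g(\sigma^\star) = (1+\AR(\rho))\,\sigma^\star$, the equality using that $\sigma^\star$ is symmetric. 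Pairing against any POVM element $M_g\geq 0$, multiplying by $p_g$, and summing gives $p^{\textup{succ}}_{\{p_g\},\{M_g\}}(\rho)\leq (1+\AR(\rho))\sum_g p_g\Tr[\sigma^\star M_g]$; the right-hand sum is exactly $p^{\textup{succ}}_{\{p_g\},\{M_g\}}(\sigma^\star)$, so maximizing over $\{M_g\}$ and invoking \eqref{eq:psuccsym} closes this bound.

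The second lower bound $p^{\textup{succ}}_{\{p_g\}}(\mathscr{S})\leq p^{\textup{succ}}_{\{p_g\}}(\rho)$ is immediate: the state-independent strategy of always guessing the most likely element $g^{\max}$ (that is, $M_{g^{\max}}=\I$ and $M_g=0$ otherwise) yields $p_{\sf G}^{\max}=p^{\textup{succ}}_{\{p_g\}}(\mathscr{S})$ for every probe, $\rho$ included, while $p^{\textup{succ}}_{\{p_g\}}(\rho)$ is the optimum over all POVMs.

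The crux is the first lower bound $\tfrac{1}{|{\sf G}|}(1+\AR(\rho)) \leq p^{\textup{succ}}_{\{p_g\}}(\rho)$, for which I would invoke the dual SDP in the form \eqref{eq:SDPrestated}: choose an optimal $X^\star\geq 0$ with $\cE(X^\star)=\I$ and $\Tr[X^\star\rho]=1+\AR(\rho)$. The key move is to convert $X^\star$ into a decoding POVM by setting $M_g := \tfrac{1}{|{\sf G}|}\,U_g X^\star U_g^\dagger$. Positivity of each $M_g$ follows from $X^\star\geq 0$, and completeness $\sum_g M_g = \cE(X^\star)=\I$ follows directly from the constraint, since $\cE(X)=\tfrac{1}{|{\sf G}|}\sum_g U_g X U_g^\dagger$. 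A short computation using cyclicity of the trace and $U_g^\dagger U_g=\I$ shows
\[
\Tr[\cU_g(\rho)M_g] = \tfrac{1}{|{\sf G}|}\Tr[\rho X^\star],
\]
which is independent of $g$. Hence $p^{\textup{succ}}_{\{p_g\},\{M_g\}}(\rho)=\tfrac{1}{|{\sf G}|}\Tr[\rho X^\star]\sum_g p_g = \tfrac{1}{|{\sf G}|}(1+\AR(\rho))$ for every prior, and the bound follows because $p^{\textup{succ}}_{\{p_g\}}(\rho)$ maximizes over POVMs.

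The main obstacle is conceptual rather than computational: one must see why $\tfrac{1}{|{\sf G}|}(1+\AR(\rho))$ is a valid lower bound for an \emph{arbitrary} prior $\{p_g\}$ and not merely for the uniform one. The resolution is precisely the $g$-independence of $\Tr[\cU_g(\rho)M_g]$ for the tailored POVM, which makes the prior dependence collapse through $\sum_g p_g=1$. As a consistency check I would note that for the uniform prior the upper bound and this first lower bound coincide, pinning the success probability exactly to $\tfrac{1}{|{\sf G}|}(1+\AR(\rho))$.
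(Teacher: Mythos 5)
Your proposal is correct and follows essentially the same route as the paper's proof: the upper bound via the order characterization $\rho \leq (1+\AR(\rho))\sigma$ together with symmetry of $\sigma$, the first lower bound via the optimal $X$ of the dual SDP \eqref{eq:SDPrestated} turned into the covariant POVM $M_g = \tfrac{1}{|{\sf G}|}\,\cU_g(X)$ whose success probability is $g$-independent, and the second lower bound via the trivial guess-$g^{\max}$ strategy. Your added consistency check (coincidence of the bounds for the uniform prior) is exactly the observation the paper exploits afterwards in Corollary~\ref{cor:operational}.
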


\begin{proof}
The second inequality is just a consequence of the definition of $\AR(\rho)$, which implies that there is a symmetric $\sigma$ such that
$\rho\leq (1+\AR(\rho))\sigma$,
so that
\[
\begin{aligned}
\sum_g p_g \Tr[\cU_g(\rho)M_g]
& \leq (1+\AR(\rho)) \sum_g p_g \Tr[\sigma M_g] \\
& \leq (1+\AR(\rho)) p^{\textup{succ}}_{\{p_g\}}(\mathscr{S}).
\end{aligned}
\]

On the other hand, to prove the first inequality, consider the optimal $X$ for the SDP \eqref{eq:SDPrestated}, which is such that $\Tr[X\rho]=1+\AR(\rho)$. We observe that, since $X\geq 0$, also $\cU_g(X) \geq 0$. Furthermore, due to Eq.~(\ref{eq:cE}), and since $\cE(X)=\I$, we have that $\{M_g\}$, with
\begin{equation}\label{eq:validpovm}
M_g = \frac{1}{|{\sf G}|} \cU_g(X)\,,
\end{equation}
is a valid POVM, as $\sum_g M_g = \cE(X)=\openone$. We have then
\[
\begin{aligned}
\sum_g p_g \Tr[M_g  \mathcal{U}_g(\rho)]
&=\sum_g \frac{p_g}{|{\sf G}|}\Tr[U_g X U_g^{\dagger} U_g \rho U_g^{\dagger}]\\
&=\sum_g \frac{p_g}{|{\sf G}|} \Tr[X \rho]\\
&=\frac{1}{|{\sf G}|}\Tr[X \rho]\\
&=\frac{1}{|{\sf G}|}(1+\AR(\rho)).
\end{aligned}.
\]
This proves that
\[
p^{\textup{succ}}_{\{p_g\}}(\rho) \geq \frac{1}{|{\sf G}|}(1+\AR(\rho))\,.
\]
The other possibility in the lower bound of (\ref{eq:boundsright}), i.e.~$p^{\textup{succ}}_{\{p_g\}}(\rho) \geq p^{\textup{succ}}_{\{p_g\}}(\mathscr{S})$,  follows from the fact that simply guessing $g^{\textup{max}}$ is always a potentially valid strategy.
\end{proof}

We remark that the proof of Theorem~\ref{thm:operational} also provides a proof of the general relation
\[
\AR(\rho) \leq |{\sf G}| - 1\,,
\]
since the probability of success is bounded above by 1.

As a consequence of Theorem~\ref{thm:operational}, one can furthermore write the following explicit result.

\begin{corollary}\label{cor:operational}
For any state $\rho$ and prior probability distribution $\{p_g\}_{g \in {\sf G}}$, it holds that
\beq
\max_{\{p_g\}} \frac{p^{\textup{succ}}_{\{p_g\}}(\rho)}{p^{\textup{succ}}_{\{p_g\}}(\mathscr{S})}=1+\AR(\rho)\,.
\eeq
\end{corollary}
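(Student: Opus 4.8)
The plan is to derive Corollary~\ref{cor:operational} directly from the two-sided bound established in Theorem~\ref{thm:operational}, by maximizing the ratio $p^{\textup{succ}}_{\{p_g\}}(\rho)/p^{\textup{succ}}_{\{p_g\}}(\mathscr{S})$ over all prior distributions $\{p_g\}$ and showing that both bounds can be made to coincide for a suitable choice of prior.

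First I would establish the upper bound. The second inequality in Eq.~\eqref{eq:boundsright} states that $p^{\textup{succ}}_{\{p_g\}}(\rho) \leq (1+\AR(\rho))\,p^{\textup{succ}}_{\{p_g\}}(\mathscr{S})$ for \emph{every} prior $\{p_g\}$. Since $p^{\textup{succ}}_{\{p_g\}}(\mathscr{S})>0$ always (one can always achieve at least $p_{\sf G}^{\max}$), dividing through gives
\[
\frac{p^{\textup{succ}}_{\{p_g\}}(\rho)}{p^{\textup{succ}}_{\{p_g\}}(\mathscr{S})} \leq 1+\AR(\rho)
\]
uniformly in $\{p_g\}$, so the maximum over priors is likewise bounded above by $1+\AR(\rho)$.

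For the matching lower bound, the key observation is to exploit the first inequality of Theorem~\ref{thm:operational} for the specific choice of the \emph{uniform} prior $p_g = 1/|{\sf G}|$ for all $g$. With this choice, Eq.~\eqref{eq:psuccsym} gives $p^{\textup{succ}}_{\{p_g\}}(\mathscr{S}) = \max_g p_g = 1/|{\sf G}|$, while the lower bound in Eq.~\eqref{eq:boundsright} yields $p^{\textup{succ}}_{\{p_g\}}(\rho) \geq \frac{1}{|{\sf G}|}(1+\AR(\rho))$. Taking the ratio for this particular prior, the factors of $|{\sf G}|$ cancel and we obtain exactly $1+\AR(\rho)$. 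Hence the supremum over all priors is at least $1+\AR(\rho)$, and combined with the upper bound this proves equality, with the maximum attained (so that writing $\max$ rather than $\sup$ is justified).

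I do not expect any serious obstacle here, since the corollary is essentially a repackaging of Theorem~\ref{thm:operational}: the upper bound is immediate and the lower bound is witnessed by the uniform prior. The only point requiring a modest amount of care is verifying that for the uniform prior the symmetric success probability genuinely equals $1/|{\sf G}|$ (which follows directly from Eq.~\eqref{eq:psuccsym}) and that the POVM constructed in the proof of Theorem~\ref{thm:operational} indeed saturates the lower bound for this prior, making the ratio exactly $1+\AR(\rho)$ rather than merely bounding it.
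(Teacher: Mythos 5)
Your proof is correct and follows essentially the same route as the paper's: divide the two-sided bound of Theorem~\ref{thm:operational} by $p^{\textup{succ}}_{\{p_g\}}(\mathscr{S})$ to get the uniform upper bound, then observe that the flat prior $p_g = 1/|{\sf G}|$, for which $p^{\textup{succ}}_{\{p_g\}}(\mathscr{S}) = 1/|{\sf G}|$, makes the lower and upper bounds coincide. No gaps.
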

\begin{proof}
Let us divide \eqref{eq:boundsright} by $p^{\textup{succ}}_{\{p_g\}}(\mathscr{S})$. We then find
\begin{multline}
\max\left\{\frac{1}{|{\sf G}|\ p^{\textup{succ}}_{\{p_g\}}(\mathscr{S})}(1+\AR(\rho))\,,1\right\}\\
\leq
\frac{p^{\textup{succ}}_{\{p_g\}}(\rho)}{p^{\textup{succ}}_{\{p_g\}}(\mathscr{S})}
\leq \\
(1+\AR(\rho)).
\end{multline}
The lower bound matches the upper bound in the case $p^{\textup{succ}}_{\{p_g\}}(\mathscr{S})=p_G^{\textup{max}} =\frac{1}{|{\sf G}|}$, that is for a flat prior probability distribution over ${\sf G}$.
\end{proof}

We notice that, although we have focused on the discrimination of the channels $\{\cU_g\}$, the result of Corollary  \ref{cor:operational} can be generalized to any set of channels of which $\{\cU_g\}$ is a subset. That is, we have the following:
\begin{corollary}\label{cor:operationalgeneral}
Let $\{\Lambda_i\}$ be a set of channels such that $\{\cU_g\}\subseteq \{\Lambda_i\}$. Consider a probability distribution $\{p_i\}$ on  $\{\Lambda_i\}$. Then
\beq
\label{eq:operationalgeneral}
\max_{\{p_i\}} \frac{p^{\textup{succ}}_{\{p_i\}}(\rho)}{p^{\textup{succ}}_{\{p_i\}}(\mathscr{S})}=1+\AR(\rho)\,.
\eeq
Here $p^{\textup{succ}}_{\{p_i\}}(\mathscr{S})$ is the maximal probability of success in the discrimination of the channels $\{\Lambda_i\}$ when these are applied with a priori probability distribution $\{p_i\}$, maximized over the choice of any input arbitrary symmetric state; $p^{\textup{succ}}_{\{p_i\}}(\rho)$ is instead the maximal probability of success by using the given $\rho$.
\end{corollary}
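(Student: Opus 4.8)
The plan is to establish the equality in \eqref{eq:operationalgeneral} by proving the two inequalities separately, adapting the two halves of the proof of Theorem~\ref{thm:operational} to the enlarged set of channels $\{\Lambda_i\}$. Throughout I would rely on the characterization $\rho\leq(1+\AR(\rho))\sigma$ with $\sigma\in\mathscr{S}$ coming from the optimal pseudomixture, and on the POVM construction of Theorem~\ref{thm:operational}.

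For the upper bound ``$\leq$'', I would observe that the argument giving the right-hand inequality in \eqref{eq:boundsright} never used that the channels being discriminated were exactly the $\cU_g$; it only used that they are positive trace-preserving maps. Concretely, for any fixed distribution $\{p_i\}$ and POVM $\{M_i\}$, I would start from $\rho\leq(1+\AR(\rho))\sigma$, apply each channel $\Lambda_i$ --- which preserves the L\"owner order by positivity --- to obtain $\Lambda_i(\rho)\leq(1+\AR(\rho))\Lambda_i(\sigma)$, then take traces against $M_i\geq 0$ and sum with weights $p_i$. This yields $p^{\textup{succ}}_{\{p_i\}}(\rho)\leq (1+\AR(\rho))\,p^{\textup{succ}}_{\{p_i\}}(\mathscr{S})$ for every prior $\{p_i\}$, since $\sum_i p_i\Tr[\Lambda_i(\sigma)M_i]$ is bounded above by the optimal symmetric-state success probability. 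Dividing and maximizing over $\{p_i\}$ gives the $\leq$ direction.

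For the lower bound ``$\geq$'', the idea is to exhibit one prior that attains the ratio $1+\AR(\rho)$, namely the flat distribution concentrated on the subset $\{\cU_g\}$: assign $p_i=1/|{\sf G}|$ to each channel equal to some $\cU_g$ and $p_i=0$ to every other $\Lambda_i$. With this choice the additional channels drop out of all success probabilities. The denominator then reduces to $p^{\textup{succ}}_{\{p_i\}}(\mathscr{S})=1/|{\sf G}|$: for any symmetric $\sigma$ and any POVM, $\cU_g(\sigma)=\sigma$ gives $\sum_g \tfrac{1}{|{\sf G}|}\Tr[\sigma M_g]\leq \tfrac{1}{|{\sf G}|}\Tr[\sigma\sum_i M_i]=\tfrac{1}{|{\sf G}|}$, with equality attainable. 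For the numerator I would reuse the POVM built in Theorem~\ref{thm:operational} from the optimal $X$ of the SDP \eqref{eq:SDPrestated}, setting $M_i=\tfrac{1}{|{\sf G}|}\cU_g(X)$ on the $\cU_g$ and $M_i=0$ elsewhere; since $\cE(X)=\I$ this is a valid POVM and it yields $p^{\textup{succ}}_{\{p_i\}}(\rho)\geq \tfrac{1}{|{\sf G}|}(1+\AR(\rho))$. The ratio for this prior already equals $1+\AR(\rho)$, so the maximum is at least $1+\AR(\rho)$.

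Combining the two inequalities yields \eqref{eq:operationalgeneral}. The only point requiring care --- and effectively the crux of the argument --- is verifying that enlarging the channel set cannot change the value: for the upper bound this is automatic, because the estimate holds for arbitrary positive trace-preserving channels, while for the lower bound it follows from the freedom to place zero prior weight on the extra channels, which simultaneously removes them from the numerator and keeps the symmetric-state success probability pinned at $1/|{\sf G}|$, thereby reducing the situation to the flat-prior instance already settled in Corollary~\ref{cor:operational}.
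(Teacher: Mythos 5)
Your proof is correct and follows essentially the same route as the paper's: the upper bound via $\rho\leq(1+\AR(\rho))\sigma$ and positivity of the channels, and saturation by reducing to the flat prior on $\{\cU_g\}$ with zero weight on the extra channels. The paper simply compresses the second step by citing Corollary~\ref{cor:operational} directly, whereas you unpack that saturation argument explicitly; the content is identical.
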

\begin{proof}
As in the proof of Theorem \ref{thm:operational}, it is easy to see that
\[
p^{\textup{succ}}_{\{p_i\}}(\rho)= \max_{\{M_i\}}\sum_i p_i \Tr[\Lambda_i(\rho)M_i] \leq (1+\AR(\rho))p^{\textup{succ}}_{\{p_i\}}(\mathscr{S})
\]
is a direct consequence of $\rho\leq (1+\AR(\rho))\sigma$. Thus, in general
\[
\frac{p^{\textup{succ}}_{\{p_i\}}(\rho)}{p^{\textup{succ}}_{\{p_i\}}(\mathscr{S})}\leq 1+\AR(\rho).
\]
From the fact that $\{\cU_g\}\subseteq \{\Lambda_i\}$ and from Corollary \ref{cor:operational} we know that this upper bound can be saturated.
\end{proof}

We remark that, in general, for a set of channels $\{\Lambda_i\}$ such that $\{\cU_g\}\subseteq \{\Lambda_i\}$, the best probability of success by using a symmetric state, $p^{\textup{succ}}_{\{p_i\}}(\mathscr{S})$, is not just equal to $\max_i p_i$, as the channels $\{\Lambda_i\}$ act in general non-trivially even on symmetric states. Yet, Eq.~\eqref{eq:operationalgeneral} holds. The point is that the case $\{\Lambda_i\}=\{\cU_g\}$, with $p_g=1/|{\sf G}|$, is the one where there is the largest possible advantage in using the asymmetric state; any class of channel discrimination problems that include this latter problem will satisfy Eq.~\eqref{eq:operationalgeneral}.

\subsection{Finite versus continuous groups: the $\textsf{U}(1)$ case}

While we referred to finite groups so far in the paper, the theory we have developed can be applied also to compact groups like, e.g., the $d$-dimensional representation of $\textsf{U}(1)$,  or the defining representation of $\textsf{U}(d)$, as well as its tensor product representation 
on $n$ $d$-dimensional systems.

Consider the representation $\{U(g)\}$ of a continuous compact group ${\sf G}$ with Haar measure $dU(g)$, e.g., a compact Lie group. Suppose one  can define a finite set ${\sf X}=\{U_x\}\subseteq \{U(g)\}$ such that, for any state $\xi$,
\[
\cE(\xi)=\int_{\sf G} U(g)\xi U^\dagger(g)dU(g)=\frac{1}{|{\sf X}|}\sum_x U_x \xi U_x^\dagger.
\]
This applies, e.g., to the defining representation of $\textsf{U}(d)$, in which case ${\sf X}$ is said to form a $1$-design~\cite{gross2007}.
Then, when we discuss the discrimination of covariant channels as in Sec.~\ref{sec:asymmetrychannel}, we can simply imagine discriminating among the action of the discrete set of channels $U_x$ in ${\sf X}$.

On the other hand, for the sake of the definition of the RoA, and its computation by means of an SDP, it is clear that the only thing that matters is the actual action of the group average $\cE$, independently of any detail of how $\cE$ is implemented.

In the case of $\textsf{U}(1)$, its $d$-dimensional representation is given by $\{U(\theta)\}$ with \begin{equation}
\label{eq:Utheta}
 U(\theta)=\sum_{j=0}^{d-1} {\rm e}^{i{\theta}j}\proj{j},\quad\theta\in[0,2\pi].
 \end{equation}
The group average is equivalent to the total dephasing in the basis $\{\ket{j}\}$:
\begin{equation}
\cE(\xi)=\frac{1}{2\pi} \int_0^{2\pi} d\theta U(\theta) \xi U(\theta)^\dagger = \sum_{j=0}^{d-1} \proj{j} \xi \proj{j}=:\Delta(\xi).
\label{eq:dephasing}
\end{equation}
Equivalently,  the group average is the same as the average over the representation of the cyclic group ${\sf G'}={\sf Z}_d$, with $|{\sf G'}|=d$, on $\bC^d$, given by $\{Z^k\}_{k=0}^{d-1}$, where $Z$ is the phase flip operator
\begin{equation}\label{eq:phase}
Z\ket{j}={\rm e}^{i\frac{2\pi}{d}j}\ket{j}.
\end{equation}
In this case, asymmetry with respect to such a $d$-dimensional representation of ${\sf U}(1)$ can be regarded as {\it coherence}, that is, quantum superposition with respect to the reference basis $\{\ket{j}\}_{j=0}^{d-1}$ in the Hilbert space $\mathscr{H} = \bC^d$ \cite{Aaberg2006,Marvian2013,Baumgratz2014,Marvian2015,PRL}.

The rest of the paper is devoted to analyzing the specifics of this case.


\section{Robustness of coherence}
\label{sec:robusteness_of_coherence}

A resource theory for quantum coherence can be constructed as a special case of the resource theory of asymmetry, when we consider (a)symmetry with respect to $\textsf{U}(1)$, or, alternatively, as discussed at the end of the previous section, with respect to (the $d$-dimensional representation of) the cyclic group ${\sf Z}_d$.

Specializing from Eq.~(\ref{eq:S}), and using the notation of Eq.~(\ref{eq:dephasing}), we indicate by
\begin{equation}\label{eq:I}
\mathscr{I} := \{\delta  \in {\mathscr{D}(\cH)} : \Delta(\delta)=\delta\}
 \end{equation}
 the set of {\it incoherent states}, which can be regarded as the free states for the resource theory of coherence.
 Equivalently, every incoherent  state is diagonal in the reference basis,
 \begin{equation}\label{eq:incoherent}
 \delta=\sum_j \delta_j \proj{j}.
 \end{equation}

 The specialization of Eq.~(\ref{eq:Gcov}) using the representation in Eq.~(\ref{eq:Utheta}) defines instead the set of translationally invariant operations, which can be considered as one possible choice of free operations for the resource theory of coherence, within the context of asymmetry \cite{Marvian2015}. However, there have been different proposals to define free operations for coherence not derived from the asymmetry framework, see e.g.~\cite{Aaberg2006,Baumgratz2014,YadinG2015,Streltsov2015}. We discuss such approaches in more detail in the companion Letter \cite{PRL}, to which we refer for additional insights and motivations regarding the study of quantum coherence as a resource. Here we only remark that all the results discussed in general for asymmetry in the present paper apply directly to coherence, including: the fact that the corresponding robustness measure  is computable via an SDP, that it is directly observable---specifically by considering the analogous notion of coherence witnesses---and that can be endowed with an operational interpretation in terms of channel discrimination \cite{PRL}.
In the rest of this section, we will focus on the derivation of useful technical results and additional analysis which concern specifically the notion of robustness adapted to the special case of coherence. Most of the results obtained in the following are also announced and suitably discussed in \cite{PRL}.

 For completeness, we report the explicit definition of the {\it robustness of coherence} (RoC) $\CR$ of a $d$-dimensional quantum state  $\rho \in {\mathscr{D}}(\mathbb{C}^d)$, adapted from Eq.~(\ref{eq:RoA}) \cite{PRL},
\begin{equation}\label{eq:RoC}
\CR(\rho) = \min_{\tau \in {\mathscr{D}}(\mathbb{C}^d)} \left\{ s\geq 0\ \Big\vert\ \frac{\rho + s\ \tau}{1+s} =: \delta \in {\mathscr{I}}\right\}\,.
\end{equation}
Alternatively, adapting Eq.~\eqref{eq:RoAequivalent}, we can write
\beq
\CR(\rho) = \min_{\sigma \in {\mathscr{I}}} \left\{ s\geq 0\ \Big\vert\ \rho \leq (1+ s)\,\sigma \right\}.
\eeq

\subsection{Bounds on the robustness of coherence}
\label{sec:bounds_coherence}

We first recall that an alternative measure of quantum coherence has been introduced in Ref.~\cite{Baumgratz2014}, namely the $\ell_1$ norm of coherence, defined for a quantum state $\rho$ expanded in the reference basis $\{\ket{j}\}$ as
\begin{equation}\label{eq:CL}
\CL (\rho) = \sum_{ij}|\rho_{ij}|-1 = 2 \sum_{i<j} |\rho_{ij}|\,.
\end{equation}
We have then the following result.

\begin{theorem}
\label{thm:l1bound}
For any state $\rho \in {\mathscr{D}}(\mathbb{C}^d)$ it holds that
\beq
\label{eq:l1bound}
\frac{\CL(\rho)}{d-1} \leq \CR(\rho) \leq \CL(\rho).
\eeq
\end{theorem}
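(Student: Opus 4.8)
The plan is to prove the two inequalities in \eqref{eq:l1bound} separately, establishing the upper bound by exhibiting an explicit feasible incoherent state $\sigma$ in the definition \eqref{eq:RoC}, and the lower bound by constructing an explicit feasible coherence witness $W$ for the dual SDP.

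For the \textbf{upper bound} $\CR(\rho)\leq\CL(\rho)$, I would exploit the characterization $\CR(\rho)=\min_{\sigma\in\mathscr{I}}\{s\geq0\mid\rho\leq(1+s)\sigma\}$. The natural guess is to build $\sigma$ so that $(1+s)\sigma$ dominates $\rho$ while absorbing all the off-diagonal mass of $\rho$. A promising candidate is to take $(1+s)\sigma = \Delta(\rho) + (\text{something diagonal that covers the off-diagonal contributions})$. Concretely, I expect to set $s=\CL(\rho)=\sum_{i\neq j}|\rho_{ij}|$ and choose the diagonal state $\sigma$ whose $j$-th diagonal entry is $\bigl(\rho_{jj}+\sum_{k\neq j}|\rho_{jk}|\bigr)/(1+s)$; this is manifestly normalized since $\sum_j\bigl(\rho_{jj}+\sum_{k\neq j}|\rho_{jk}|\bigr)=1+\CL(\rho)=1+s$. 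The content of the argument is then to verify $\rho\leq(1+s)\sigma$, i.e.~that $(1+s)\sigma-\rho\geq0$; since this operator has diagonal entries $\sum_{k\neq j}|\rho_{jk}|$ and off-diagonal entries $-\rho_{jk}$, positivity should follow from a Gershgorin-type / diagonal-dominance argument, as each diagonal entry equals the sum of the absolute values of the off-diagonal entries in its row.

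For the \textbf{lower bound} $\CL(\rho)/(d-1)\leq\CR(\rho)$, I would use \eqref{eq:lowitness}: it suffices to produce one admissible witness $W$ (satisfying $W\leq\I$ and $\cE(W)=\Delta(W)\geq0$) with $-\Tr[\rho W]\geq\CL(\rho)/(d-1)$. The off-diagonal mass of $\rho$ is what $\CL$ measures, so I would take $W$ to be purely off-diagonal, of the form $W=-\sum_{i\neq j}w_{ij}\ket{i}\!\bra{j}$ with phases $w_{ij}=e^{i\arg\rho_{ij}}$ chosen so that $-\Tr[\rho W]=\sum_{i\neq j}|\rho_{ij}|=\CL(\rho)$, rescaled by a factor to enforce $W\leq\I$. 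Since such $W$ has zero diagonal, $\Delta(W)=0\geq0$ automatically; the real constraint is $W\leq\I$. The scaling factor needed to guarantee $\|W\|_\infty\leq1$ for an off-diagonal matrix of unimodular entries is governed by the operator norm of the all-ones-off-diagonal matrix $J-\I$ on $\mathbb{C}^d$, whose largest eigenvalue is $d-1$; this is precisely where the factor $1/(d-1)$ enters.

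The \textbf{main obstacle} I anticipate is the lower bound: verifying the spectral constraint $W\leq\I$ after fixing the phases $w_{ij}=e^{i\arg\rho_{ij}}$, since the phased off-diagonal matrix need not be as simple as $J-\I$. The cleanest route is probably to bound $\|W\|_\infty$ by comparison, showing that any Hermitian matrix with zero diagonal and unimodular off-diagonal entries has operator norm at most $d-1$, and then rescale $W$ by $1/(d-1)$; this costs exactly the factor appearing in the claimed bound, so the estimate is expected to be tight. The upper-bound positivity check is routine by diagonal dominance, and the normalization of $\sigma$ is a direct computation, so those steps should present no real difficulty.
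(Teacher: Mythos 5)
Your proposal is correct, and it reaches both bounds by a genuinely different route than the paper, although the underlying certificates turn out to be the same objects in dual disguise. For the lower bound, the paper works in the primal picture: it takes an optimal incoherent $\delta^\star$ with $\rho\leq(1+\CR(\rho))\delta^\star$, Hadamard-multiplies both sides by the phase matrix $M(\{-\arg\rho_{ij}\})$ of Lemma~\ref{lem:M} (whose positivity is proved by writing it as a sum of rank-one projectors onto the states $\ket{i,j,\theta_{ij},+}$), invokes Schur's theorem on Hadamard products, and then takes the expectation value in the maximally coherent state $\ket{\psi^+}$. Your witness argument is precisely the dual of this: your rescaled off-diagonal witness satisfies $\I-W=M(\{\arg\rho_{ij}\})$, so the feasibility condition $W\leq\I$ is exactly the positivity statement of Lemma~\ref{lem:M}; you certify it via Gershgorin (or by entrywise comparison with $J-\I$, which is legitimate for the operator norm since $|y^\dagger A x|\leq\sum_{ij}|y_i|\,|A_{ij}|\,|x_j|$), instead of exhibiting a rank-one decomposition. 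For the upper bound you construct the same incoherent state as the paper: the paper's $\tau\propto\sum_{i<j}|\rho_{ij}|\proj{i,j,\phi_{ij},-}$ yields $\rho+\CL(\rho)\tau=\sum_j\big(\rho_{jj}+\sum_{k\neq j}|\rho_{jk}|\big)\proj{j}$, which is exactly your $(1+s)\sigma$; but where the paper shows $(1+s)\sigma-\rho\geq0$ by displaying it as the manifestly positive operator $\CL(\rho)\tau$, you get it from weak diagonal dominance of a Hermitian matrix with nonnegative diagonal, which is again Gershgorin. What your route buys: both inequalities follow from one elementary principle (diagonal dominance), with no need for Schur's theorem, the Hadamard-product trick, or explicit superposition states. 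What the paper's route buys: the explicit rank-one decompositions produce the actual pseudomixture achieving the upper bound, and Lemma~\ref{lem:M} together with the $\ket{\psi^+}$-expectation technique is reused essentially verbatim to prove Theorem~\ref{thm:exact} (equality $\CR=\CL$ for pure and generalized X states), so the paper's heavier machinery pays off later.
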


In order to prove Theorem~\ref{thm:l1bound}, we will  make use of the states
\[
\ket{i,j,\theta,\pm}:=\frac{1}{\sqrt{2}}(\ket{i}\pm {\rm e}^{-i\theta}\ket{j})
\]
and of the following lemma.

\begin{lemma}
\label{lem:M}
For any phases $\{\theta_{ij}|i,j=0,\dots,d-1; i<j\}$, the $d\times d$ matrix
\beq
\begin{split}
M(\{\theta_{ij}\})
&=\I+\frac{1}{d-1}\sum_{i<j}({\rm e}^{i\theta_{ij}}\ket{i}\bra{j}+{\rm e}^{-i\theta_{ij}}\ket{j}\bra{i})\\
&=
\begin{pmatrix}
1 						& \frac{{\rm e}^{i\theta_{01}}}{d-1} 	& \cdots 	& \frac{{\rm e}^{i\theta_{0(d-1)}}}{d-1}	\\
\frac{{\rm e}^{-i\theta_{01}}}{d-1}	& 1 						& \cdots  	& \frac{{\rm e}^{i\theta_{1(d-1)}}}{d-1} 	\\
\vdots					& \vdots					& \ddots	& \vdots					\\
\frac{{\rm e}^{-i\theta_{0(d-1)}}}{d-1}	& \frac{{\rm e}^{-i\theta_{1(d-1)}}}{d-1}	& \cdots	& 1
\end{pmatrix}
\end{split}
\eeq
is positive semidefinite, $M(\{\theta_{ij}\})\geq 0$.
\end{lemma}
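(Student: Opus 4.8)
The plan is to exhibit $M(\{\theta_{ij}\})$ as a nonnegative (indeed uniformly weighted) combination of rank-one projectors, from which positive semidefiniteness follows at once, since each projector is positive semidefinite and the weights are nonnegative. The natural building blocks are precisely the states $\ket{i,j,\theta_{ij},+}$ introduced just before the lemma, and this is surely why they were singled out. So the whole argument reduces to guessing the right decomposition and verifying it.

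First I would compute the projector onto a single such state. Expanding,
\[
\proj{i,j,\theta_{ij},+}=\frac12\Big(\proj{i}+\proj{j}+{\rm e}^{i\theta_{ij}}\ket{i}\bra{j}+{\rm e}^{-i\theta_{ij}}\ket{j}\bra{i}\Big).
\]
The off-diagonal part already reproduces, up to a factor, the coherences appearing in $M(\{\theta_{ij}\})$; the remaining task is to control the accumulated diagonal contributions when summing over all pairs $i<j$.

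Next I would sum over all unordered pairs and invoke a counting argument: each index $k\in\{0,\dots,d-1\}$ belongs to exactly $d-1$ pairs (it is paired once with each of the other $d-1$ basis labels), whence $\sum_{i<j}(\proj{i}+\proj{j})=(d-1)\I$. Therefore
\[
\sum_{i<j}\proj{i,j,\theta_{ij},+}=\frac{d-1}{2}\,\I+\frac12\sum_{i<j}\big({\rm e}^{i\theta_{ij}}\ket{i}\bra{j}+{\rm e}^{-i\theta_{ij}}\ket{j}\bra{i}\big),
\]
and multiplying by $2/(d-1)$ reproduces exactly the definition of $M(\{\theta_{ij}\})$, i.e. $M(\{\theta_{ij}\})=\tfrac{2}{d-1}\sum_{i<j}\proj{i,j,\theta_{ij},+}\geq 0$, which is the claim.

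The computation is essentially routine, and I expect the only step requiring genuine care to be the combinatorial bookkeeping of the diagonal terms: it is exactly the multiplicity $d-1$ that cancels against the normalization $1/(d-1)$ in the definition of $M$ and makes the identity close, so an off-by-one error there would be fatal. A convenient consistency check is that both sides have trace $d$ (the left from $d$ unit diagonal entries, the right from $\tfrac{2}{d-1}\binom{d}{2}$ unit-trace projectors). For $d=1$ the statement is trivial and the sum is empty, so one implicitly takes $d\geq 2$.
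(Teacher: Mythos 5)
Your proof is correct and follows exactly the same route as the paper: exhibiting $M(\{\theta_{ij}\})=\tfrac{2}{d-1}\sum_{i<j}\proj{i,j,\theta_{ij},+}$ as a nonnegative combination of rank-one projectors. The paper states this identity as something ``one checks by inspection,'' whereas you have simply written out that inspection (including the multiplicity-$(d-1)$ counting of diagonal terms) explicitly, which is a welcome addition but not a different argument.
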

\begin{proof}
One checks by inspection that the matrix $M(\{\theta_{ij}\})$ can be written as
\[
M(\{\theta_{ij}\})=\frac{2}{d-1}\sum_{i<j} \proj{i,j,\theta_{ij},+},
\]
that is, as the sum of positive semidefinite matrices, hence it is manifestly positive semidefinite.
\end{proof}

\begin{proof} (of Theorem~\ref{thm:l1bound})
To prove the lower bound, consider an optimal incoherent state $\delta^{\star}$ such that
\beq
\label{eq:proofbound}
\rho \leq (1+\CR(\rho))\delta^{\star}.
\eeq
Let $\phi_{ij}$ indicate the phase of the matrix element $\rho_{ij}$, for $i<j$, that is $\rho_{ij}=|\rho_{ij}|{\rm e}^{i\phi_{ij}}$. Consider $M=M(\{-\phi_{ij}\})$, with $M(\{-\phi_{ij}\})$ as in Lemma \ref{lem:M}, with the choice $\{\theta_{ij}\}=\{-\phi_{ij}\}$. The lemma assures that $M\geq 0$. By Schur's theorem,  the Hadamard product---that is, entry-wise product---of two positive semidefinite matrices is positive semidefinite, hence we have
\[
\rho \circ M\leq (1+\CR(\rho))\ \delta^{\star}\circ M.
\]
From the definition of $M$, $\rho'=\rho \circ M$ is a matrix whose off-diagonal entries are the rescaled absolute values of the entries of $\rho$, more precisely $\rho'_{ij}=|\rho_{ij}|/(d-1)$, for $i\neq j$. On the other hand, the diagonal entries of $\rho$ are the same as those of $\rho'$, $\rho'_{ii}=\rho_{ii}$. Furthermore, since $\delta^{\star}$ is diagonal, $\delta^{\star}\circ M = \delta^{\star}$, i.e., still an incoherent state. Therefore,
\[
\rho'\leq (1+\CR(\rho))\delta^{\star}.
\]
We now take  the expectation value on both sides with the maximally coherent state
\begin{equation}\label{eq:maxcoh}
\ket{\psi^+}=\frac{1}{\sqrt{d}}\sum_{j=0}^{d-1}\ket{j}\,,
\end{equation}
obtaining
\[
\bra{\psi^+}\rho'\ket{\psi^+}\leq (1+\CR(\rho))\bra{\psi^+}\delta^{\star}\ket{\psi^+}.
\]
The left-hand side is equal to
\[\frac{1}{d}\left(1+ \frac{2}{d-1}\sum_{i<j} |\rho_{ij}|\right)=\frac{1}{d}\left(1+\frac{1}{d-1}\CL(\rho)\right).\]
On the other hand, since $\delta^{\star} = \sum_{j}\delta^{\star}_{j} \ket{j}\bra{j}$, one has
\[
\bra{\psi^+}\delta^{\star}\ket{\psi^+}=\sum_j \delta^\star_j |\braket{\psi^+}{j}|^2=\sum_j \delta^\star_j \ \frac{1}{d}=\frac{1}{d}.
\]
We then find $\frac{1}{d-1}\CL(\rho) \leq \CR(\rho)$.

To prove the upper bound, we will exhibit a state $\tau$ such that
\[
\frac{\rho+\CL(\rho)\ \tau}{1+\CL(\rho)}
\]
is incoherent. Considering a modification of what was done in~\cite{Robustness,GenRobustness,harrow2003} to calculate the robustness of entanglement of pure states, we define $\tau$ to be
\[
\tau = \frac{2}{\CL(\rho)}\sum_{i<j}|\rho_{ij}|\ \proj{i,j,\phi_{ij},-},
\]
so that
\[
\begin{aligned}
&\quad\rho+\CL(\rho)\ \tau\\
&=\sum_{ij} \rho_{ij} \ket{i}\bra{j}+2\sum_{i<j}|\rho_{ij}|\proj{i,j,\phi_{ij},-}\\
&=\sum_{j}|\rho_{jj}|\ket{j}\bra{j}+\sum_{i<j}|\rho_{ij}|({\rm e}^{i\phi_{ij}}\ket{i}\bra{j}+{\rm e}^{-i\phi_{ij}}\ket{j}\bra{i})\\
&\quad+2 \sum_{i<j}|\rho_{ij}|\proj{i,j,\phi_{ij},-}\\
&=\sum_{j}|\rho_{jj}|\ket{j}\bra{j}\\
&\quad+\sum_{i<j}|\rho_{ij}|(\proj{i,j,\phi_{ij},+}-\proj{i,j,\phi_{ij},-})\\
&\quad+2\sum_{i<j}|\rho_{ij}|\proj{i,j,\phi_{ij},-}\\
&=\sum_{j}|\rho_{jj}|\ket{j}\bra{j}\\
&\quad+\sum_{i<j}|\rho_{ij}|(\proj{i,j,\phi_{ij},+}+\proj{i,j,\phi_{ij},-})\\
&=\sum_{j}|\rho_{jj}|\ket{j}\bra{j}+\sum_{i<j}|\rho_{ij}|(\proj{i}+\proj{j}),
\end{aligned}
\]
which is incoherent. This concludes the proof.
\end{proof}

We argue that the bounds in \eqref{eq:l1bound} are both tight, at least as linear inequalities. The upper bound can be achieved with equality, for instance, on pure $d$-dimensional states, as proven in the next section (see Theorem~\ref{thm:exact}).
 In order to see that the lower bound of \eqref{eq:l1bound} is also tight, it is sufficient to prove that there is a family of states such that the bound is saturated. One such a family is provided by
\begin{equation}\label{eq:rhop}
\rho_p=(1+p)\frac{\I}d-p\proj{\psi^+}, \quad 0\leq p \leq \frac{1}{d-1}.
\end{equation}
For such states, one easily calculates $\CL(\rho_p)=p(d-1)$. On the other hand, $\CR(\rho_p)\leq p$, because  \[\frac{\rho_p+p\proj{\psi^+}}{1+p}=\frac{\I}{d}\] is incoherent.
We thus have that, for $\rho_p$, \[\frac{\CR(\rho_p)}{\CL(\rho_p)}\leq \frac{1}{d-1},\]
which is the opposite of the lower bound of $\eqref{eq:l1bound}$, thus proving that the latter holds with equality.

While the lower bound in \eqref{eq:l1bound} is indeed tight for low values of  $\CL$, one can see that it clearly loosens for larger values of it: indeed, $\CL$ and $\CR$ coincide when assuming their maximal value $d-1$. In order to put this observation on firmer ground, we prove another bound for $\CR$ in terms of $\CL$. We will use that, in the case of coherence, the bounds \eqref{eq:asymmetrybound} become
\beq
\label{eq:coherencelowerbound}
\CR(\rho) \geq \frac{\|\rho-\Delta(\rho)\|_2^2}{\max_{j}\bra{j}\rho\ket{j}}\geq\frac{\|\rho-\Delta(\rho)\|_2^2}{\sqrt{\sum_j \bra{j}\rho\ket{j}^2}}\geq \|\rho-\Delta(\rho)\|_2^2.
\eeq

\begin{theorem}
\label{thm:coherencelowerbound}
For any state $\rho$ it holds that
\begin{equation}\label{eq:lowerbound2}\CR(\rho)\geq f(\CL(\rho),d),
\end{equation}
where
\begin{equation}\label{eq:fCD}
f(C,d)=\frac{dC^2}{(d-1)\Big(-C (d-2)+2 \sqrt{D(C,d)}+d (d-2)+2\Big)},
\end{equation}
with $D(C,d)=(C+1)(d-1)(-C+d-1)$.
\end{theorem}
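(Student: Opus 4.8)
The plan is to start from the tightest of the three inequalities in \eqref{eq:coherencelowerbound}, namely
\[
\CR(\rho)\geq \frac{\|\rho-\Delta(\rho)\|_2^2}{\max_j \bra{j}\rho\ket{j}}\,,
\]
and to lower-bound the right-hand side purely in terms of $\CL(\rho)$ and $d$. Writing $a_j=\bra{j}\rho\ket{j}$ and $m=\max_j a_j$, the numerator is $N=\|\rho-\Delta(\rho)\|_2^2=2\sum_{i<j}|\rho_{ij}|^2$, while $\CL(\rho)=2\sum_{i<j}|\rho_{ij}|$. Since there are $\binom{d}{2}$ off-diagonal pairs, Cauchy--Schwarz immediately gives $N\geq \CL(\rho)^2/[d(d-1)]$. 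The remaining task is to produce a good upper bound on the denominator $m$ in terms of $\CL(\rho)$, which is where positivity must enter.

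For the denominator I would use $\rho\geq0$, which forces $|\rho_{ij}|\leq\sqrt{a_i a_j}$, so that
\[
\CL(\rho)+1=2\sum_{i<j}|\rho_{ij}|+\sum_j a_j\leq \Big(\sum_j\sqrt{a_j}\Big)^2\,.
\]
Isolating the largest diagonal entry and applying Cauchy--Schwarz to the remaining $d-1$ terms yields $\sum_j\sqrt{a_j}\leq \sqrt{m}+\sqrt{(d-1)(1-m)}$, hence
\[
\CL(\rho)+1\leq h(m):=\big(\sqrt{m}+\sqrt{(d-1)(1-m)}\big)^2\,.
\]
The key observation is that $h$ is strictly decreasing on $[1/d,1]$ (with $h(1/d)=d$ and $h(1)=1$), so this inequality is equivalent to $m\leq m^\star$, where $m^\star\in[1/d,1]$ is the unique solution of $h(m^\star)=\CL(\rho)+1$. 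Squaring the defining relation turns it into the quadratic $d^2 m^2-2\big[d(d-2)+2-(d-2)\CL\big]m+(\CL+2-d)^2=0$, whose discriminant simplifies to $16(\CL+1)(d-1)(d-1-\CL)=16\,D(\CL,d)$; selecting the admissible (larger) root then reproduces exactly $m^\star=g(\CL,d)/d^2$, with $g$ the bracket appearing in the denominator of $f$.

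Combining the three estimates gives $\CR(\rho)\geq N/m\geq \CL(\rho)^2/[d(d-1)\,m^\star]=f(\CL(\rho),d)$, which is the claim. The main obstacle is not any single inequality but the bookkeeping in the final step: I must verify that $h$ is monotone on the relevant interval so the constraint can be inverted into an upper bound on $m$, check the sign condition $(d-2)m+\CL+2-d\geq0$ that legitimizes the squaring, and confirm that the admissible root of the quadratic collapses precisely to $g/d^2$ (equivalently, that the discriminant reduces to $16D$). I would also flag that the two estimates $N\geq \CL^2/[d(d-1)]$ and $m\leq m^\star$ need not be simultaneously saturable, so $f$ is a valid lower bound that need not equal the exact minimum of $N/m$; it is nonetheless tight at $\CL=d-1$, where $m^\star=1/d$ and $f=d-1$ is attained by the maximally coherent state $\ket{\psi^+}$.
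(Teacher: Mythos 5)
Your proposal is correct and follows essentially the same route as the paper: the tightest inequality of Eq.~\eqref{eq:coherencelowerbound}, Cauchy--Schwarz to get $\|\rho-\Delta(\rho)\|_2^2\geq \CL(\rho)^2/[d(d-1)]$, and an inversion of a positivity-based bound on the largest diagonal entry, which is exactly the content of the paper's Lemma~\ref{lem:boundl1} (your derivation of $\CL+1\leq\big(\sqrt{m}+\sqrt{(d-1)(1-m)}\big)^2$ via $\sum_{ij}|\rho_{ij}|\leq\big(\sum_j\sqrt{a_j}\big)^2$ is an equivalent repackaging of the paper's row/column-plus-submatrix split). Your extra care about monotonicity of $h$ on $[1/d,1]$, the sign condition for squaring, and selecting the larger root is a welcome tightening of details the paper glosses over with ``inverting the relation.''
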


In order to prove Theorem~\ref{thm:coherencelowerbound} we will need the following Lemma, which may be of independent interest.
\begin{lemma}
\label{lem:boundl1}
Let $p$ a diagonal entry of $\rho \in \mathscr{D}(\bC^d)$. Then
\[
\CL(\rho)\leq \left(\sqrt{p}+\sqrt{1-p}\sqrt{d-1}\right)^2-1.
\]
Inverting the relation, we have that every diagonal entry $p$ of $\rho$ is bounded in terms of $\CL$ by
\[
p\leq \frac{-\CL(\rho) (d-2)+2 \sqrt{D(\CL(\rho),d)}+d^2-2 d+2}{d^2}.
\]
\end{lemma}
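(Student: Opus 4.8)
The plan is to prove the first inequality directly from positive semidefiniteness together with a short constrained optimization of the diagonal, and then to obtain the second (inverted) inequality purely by solving a quadratic.

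For the first bound I would start by recalling that since $\rho\geq 0$ the diagonal entries $\rho_{ii}=\bra{i}\rho\ket{i}$ are nonnegative and sum to $1$, so that $\CL(\rho)=\sum_{i\neq j}|\rho_{ij}|$. The key input is the standard consequence of positive semidefiniteness that every $2\times 2$ principal minor is nonnegative, giving $|\rho_{ij}|\leq\sqrt{\rho_{ii}\rho_{jj}}$ for all $i,j$. Summing over off-diagonal pairs yields
\[
\CL(\rho)\leq\sum_{i\neq j}\sqrt{\rho_{ii}\rho_{jj}}=\Big(\sum_i\sqrt{\rho_{ii}}\Big)^2-1.
\]
It then remains to bound $\sum_i\sqrt{\rho_{ii}}$ when one fixed diagonal entry, say $\rho_{kk}$, equals $p$. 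Writing this sum as $\sqrt{p}+\sum_{i\neq k}\sqrt{\rho_{ii}}$ with the constraint $\sum_{i\neq k}\rho_{ii}=1-p$ over $d-1$ terms, the Cauchy--Schwarz inequality (equivalently concavity of the square root) shows the remaining sum is at most $\sqrt{(d-1)(1-p)}$, with equality when those entries are all equal to $(1-p)/(d-1)$. Substituting back produces $\CL(\rho)\leq(\sqrt{p}+\sqrt{1-p}\sqrt{d-1})^2-1$, as claimed.

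For the inverted bound I would set $C=\CL(\rho)$ and read the first inequality as a constraint on the admissible $p$. Expanding the square gives $C+1\leq(d-1)-p(d-2)+2\sqrt{(d-1)p(1-p)}$; isolating the radical yields $C-(d-2)+p(d-2)\leq 2\sqrt{(d-1)p(1-p)}$, and squaring (the inequality being trivially true when the left-hand side is negative) produces a quadratic inequality in $p$ whose leading coefficient collapses to $d^2$ because $(d-2)^2+4(d-1)=d^2$. A short computation shows the discriminant equals $16\,D(C,d)$ with $D(C,d)=(C+1)(d-1)(d-1-C)$, so that the admissible $p$ lie between the two roots; taking the larger root reproduces exactly the stated upper bound.

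The conceptually substantive step is the first inequality, and within it the only genuine choice is the constrained maximization of $\sum_i\sqrt{\rho_{ii}}$, which is immediate. The main bookkeeping obstacle is verifying that the discriminant simplifies to $16\,D(C,d)$ and that the larger root matches the displayed formula; I would sanity-check this against $d=2$, where the quadratic reduces to $4p^2-4p+C^2=0$ with roots $\tfrac12\big(1\pm\sqrt{1-C^2}\big)$, consistent with both the discriminant identity and the claimed bound.
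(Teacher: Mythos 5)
Your proof is correct and takes essentially the same route as the paper's: both rest on the $2\times 2$-minor bound $|\rho_{ij}|\le\sqrt{\rho_{ii}\rho_{jj}}$ followed by Cauchy--Schwarz on the remaining diagonal entries, and your global bound $\CL(\rho)\le\left(\sum_i\sqrt{\rho_{ii}}\right)^2-1$ is just a repackaging of the paper's split into the first row/column plus the complementary principal submatrix, with the same extremal state saturating the inequality. Your explicit inversion---leading coefficient $d^2$, discriminant $16\,D(C,d)$, larger root matching the displayed formula---is correct and in fact more detailed than the paper, which asserts the inverted bound without computation; the only point left implicit is that when the isolated quantity $C-(d-2)+p(d-2)$ is negative one must still argue $p$ does not exceed the larger root (which follows since that quantity is nondecreasing in $p$ and nonnegative at the larger root), rather than only noting that the unsquared inequality is vacuously true there.
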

\begin{proof}
It is enough to upper bound the $\ell_1$ norm $\|\rho\|_{\ell_1}$ of $\rho$. Without loss of generality, for simplicity we can assume that $p$ is the $\rho_{00}$ entry. Consider first the principal submatrix corresponding to the rows and columns from $1$ to $d-1$. Such a submatrix is $(d-1)\times(d-1)$ and its trace is, by hypothesis, $1-p$. Hence, its $\ell_1$ norm is upper bounded by $(1-p)(d-1)$, which is is achieved by the submatrix whose entries are all equal to $(1-p)/(d-1)$. On the other hand, to evaluate the contribution to $\|\rho\|_{\ell_1}$ of the first row and column, we can focus on the last $d-1$ entries of row $0$, since such a contribution is equal to $p+2\sum_{j=1}^{d-1} |\rho_{0j}|$ due to the hermiticity of $\rho$. The positivity of $\rho$ forces $|\rho_{0j}|\leq \sqrt{\rho_{00}\rho_{jj}}=\sqrt{p}\sqrt{\rho_{jj}}$. Therefore
\begin{eqnarray}
\sum_{j=1}^{d-1} |\rho_{0j}| \leq \sum_{j=1}^{d-1}\sqrt{p}\sqrt{\rho_{jj}} &\leq & \sqrt{d-1}\sqrt{p} \sqrt{\sum_{j=1}^{d-1}\rho_{jj}} \nonumber \\ &=& \sqrt{d-1}\sqrt{p(1-p)}, \nonumber
\end{eqnarray}
by the Cauchy-Schwartz inequality (equivalently, by the concavity of the square root) and the fact that $\sum_{j=1}^{d-1} p_{jj}=1-\rho_{00}=1-p$. The bound is saturated by the choice $\rho_{0j}=\sqrt{\frac{p(1-p)}{d-1}}$, consistent---in terms of positivity of the overall matrix---with the choice $\rho_{jj}=\frac{1-p}{d-1}$ for $1\leq j \leq d-1$. Overall we found
\begin{eqnarray}
\|\rho\|_{\ell_1} &\leq& p + 2\sqrt{d-1}\sqrt{p(1-p)}+(1-p)(d-1) \nonumber \\
&\leq& \big(\sqrt{p}+\sqrt{1-p}\sqrt{d-1}\big)^2,\nonumber
\end{eqnarray}
which, as evident through our construction, is saturated by the density matrix corresponding to the pure state $\ket{\psi}=\sqrt{p}\ket{0}+\sqrt{\frac{1-p}{d-1}}\sum_{j=1}^{d-1}\ket{j}$.
\end{proof}

We are now ready to prove Theorem~\ref{thm:coherencelowerbound}.
\begin{proof} (of Theorem~\ref{thm:coherencelowerbound}) We have
\[
\|\rho-\Delta(\rho)\|_2^2=2\sum_{i<j}|\rho_{ij}|^2\geq \frac{4}{d(d-1)} \left(\sum_{i<j}|\rho_{ij}|\right)^2=\frac{\mathscr{C}^2_{\ell_1}(\rho)}{d(d-1)}.
\]
From \eqref{eq:coherencelowerbound} we have then
\[
\CR(\rho) \geq \frac{\mathscr{C}^2_{\ell_1}(\rho)}{d(d-1)\max_i \bra{i}\rho\ket{i}}.
\]
We can now invoke Lemma \ref{lem:boundl1}, which, applied in the case $p=\max_i \bra{i}\rho\ket{i}$,
lets us conclude
\[
\CR(\rho) \geq f(\CL(\rho),d),
\]
with $f(C,d)$ defined in Eq.~(\ref{eq:fCD}).
\end{proof}

We remark that $f(C,d)$ as in Theorem~\ref{thm:coherencelowerbound} is continuous and satisfies $f(d-1,d)=d-1$, which proves that $\CR(\rho)$ converges to $\CL(\rho)$ for all $d$-dimensional states $\rho$ with $\CL(\rho)$ close to its maximum value $d-1$.

\begin{figure*}[t!]
  \centering
\includegraphics[width=8cm]{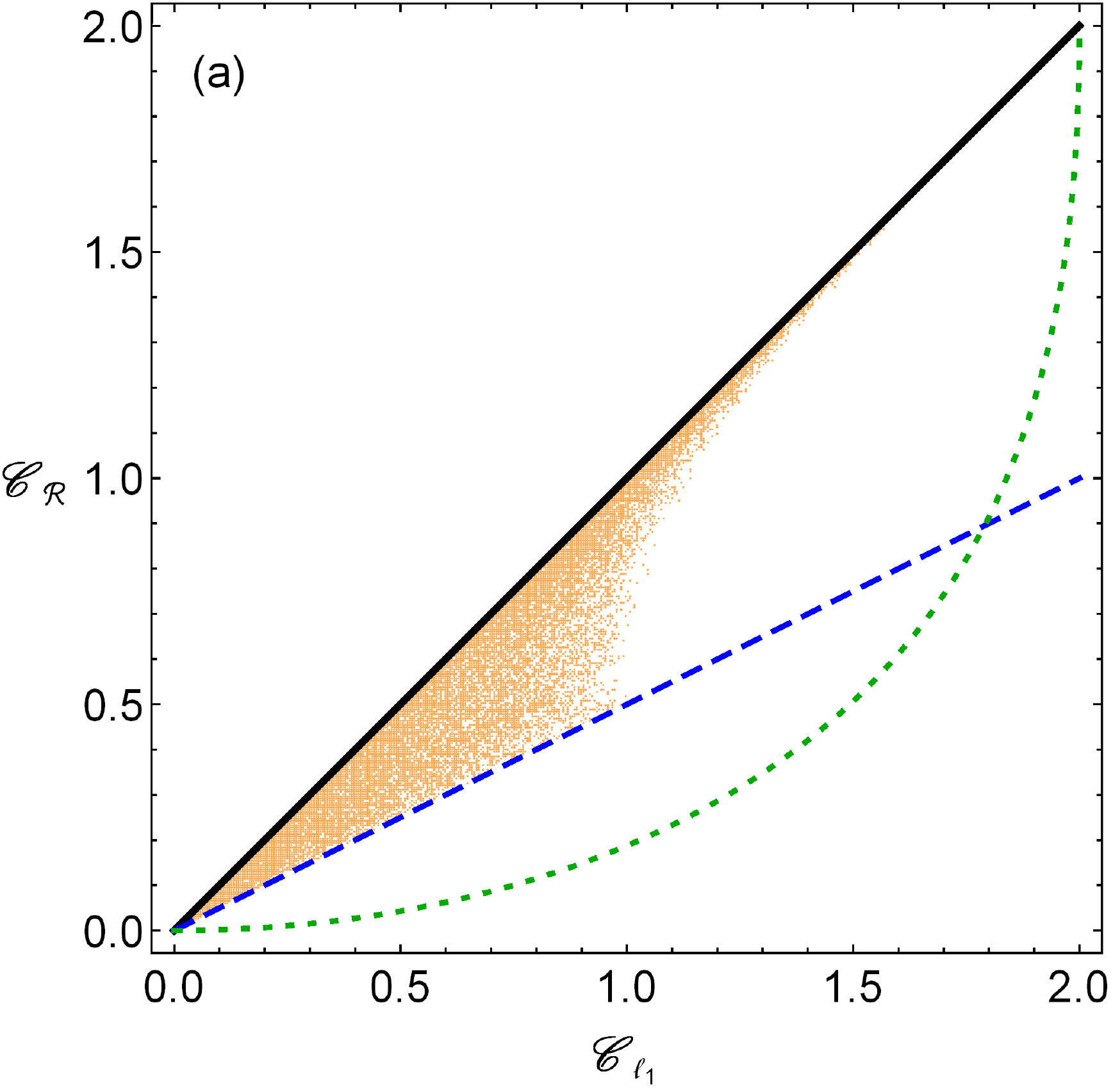}\hspace{1cm}
\includegraphics[width=8cm]{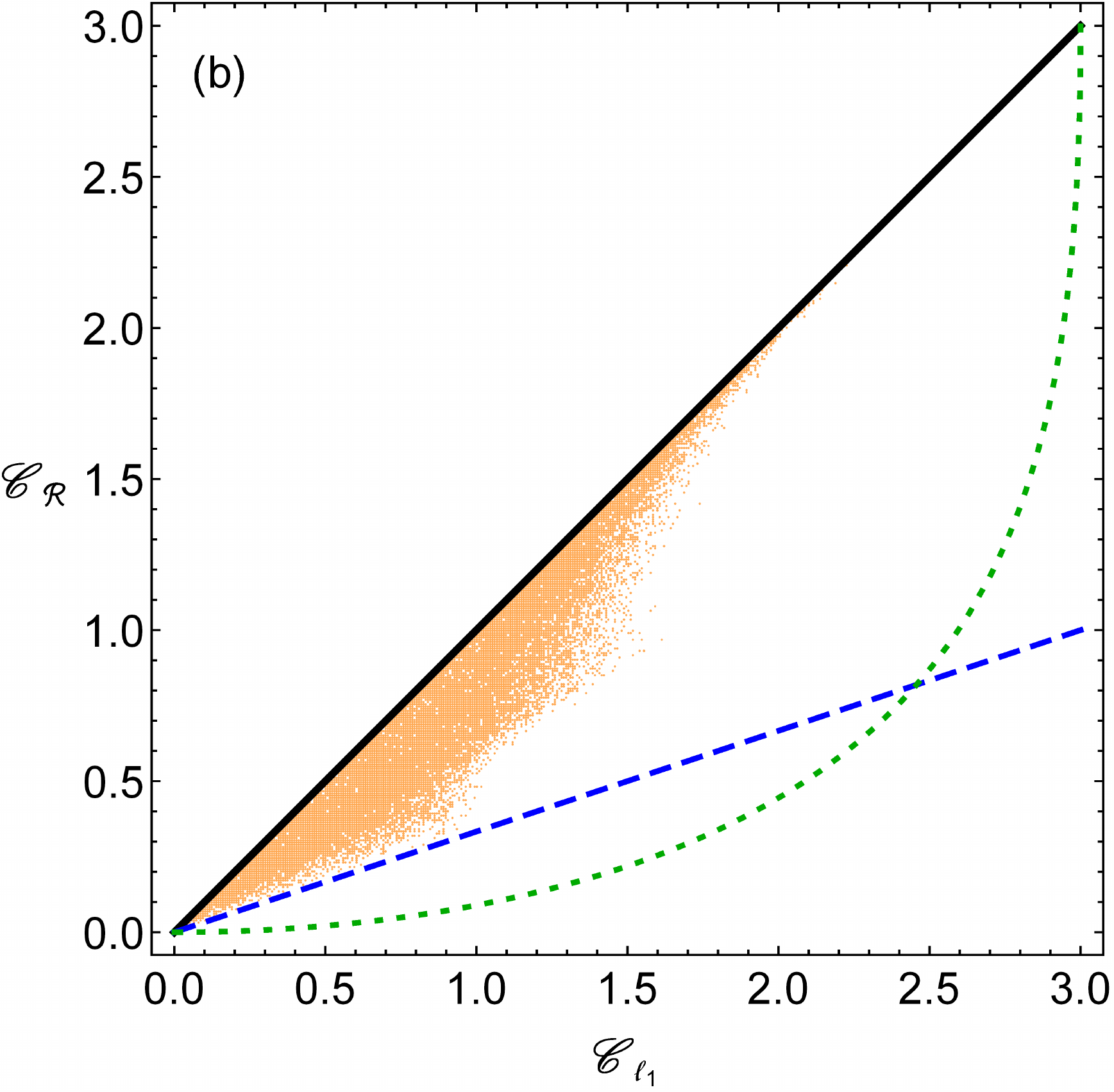} \\
 \caption{
(Color online) Comparison between the $\ell_1$ norm of coherence $\CL$ (horizontal axis) and the robustness of coherence $\CR$ (vertical axis) for $3 \times 10^4$ randomly generated $d$-dimensional states, with  (a) $d=3$, and (b) $d=4$. In all panels we plot additionally some bounds to $\CR$ as a function of $\CL$: the solid thick black line $\CR=\CL$ denotes the upper bound of Eq.~(\ref{eq:l1bound}), which is saturated by pure states; the dashed blue line $\CR=\CL/(d-1)$ denotes the lower bound of Eq.~(\ref{eq:l1bound}), which is saturated by the states in (\ref{eq:rhop}) up to $\CL = 1$; the dotted green line denotes the alternative (non-tight) lower bound of Eq.~(\ref{eq:lowerbound2}). All the quantities plotted are dimensionless.}
 \label{fig:lowers}
\end{figure*}

In Fig.~\ref{fig:lowers} we compare the measures $\CR$ versus $\CL$ for randomly generated states in dimension $d=3$ and $d=4$. Notice that for all states of a qubit ($d=2$) the two measures coincide instead,  as remarked in the next section.

\subsection{Exact robustness of coherence for pure states and generalized X states}
\label{sec:exact_coherence}

Here we show that for a relevant class of mixed states of $d$-dimensional systems one can evaluate the RoC exactly \cite{PRL}.

\begin{theorem}
\label{thm:exact}

Let $\rho \in {\mathscr{D}}(\mathbb{C}^d)$ be a state such that there exists a unitary $U = \sum_j {\rm e}^{i \phi_j} \ket{j}\!\bra{j}$, diagonal in the reference basis $\{\ket{j}\}$, which maps $\rho$ into $\rho' = U\rho U^{\dagger}$ with entries  $\rho'_{ij} = |\rho_{ij}|$. Then
\begin{equation}\label{eq:Rex}
\CR(\rho)=\CL(\rho).
\end{equation}
In particular, for the RoC of a pure state $\ket{\psi} \in \bC^d$,
\begin{equation}\label{eq:purestate}
\ket{\psi}=\sum_j \psi_j \ket{j},
\end{equation}
this gives  the result
\begin{equation}\label{eq:Rpure}
\CR(\proj{\psi})=\CL(\proj{\psi})=\mbox{$\left(\sum_j |\psi_j|\right)^2-1$}.
\end{equation}
\end{theorem}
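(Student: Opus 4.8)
The plan is to prove the two inequalities $\CR(\rho) \leq \CL(\rho)$ and $\CR(\rho) \geq \CL(\rho)$ separately. The first already holds for \emph{every} state by the upper bound of Theorem~\ref{thm:l1bound}, so the entire content lies in establishing the matching lower bound under the stated hypothesis. To certify a lower bound I would exhibit a single feasible point of the coherence-specialized dual SDP \eqref{eq:SDPrestated}, that is, an operator $X \geq 0$ with $\Delta(X) = \I$, and evaluate $\Tr[X\rho] - 1$; since that SDP is a maximization equal to $\CR(\rho)$, any feasible $X$ immediately gives $\CR(\rho) \geq \Tr[X\rho] - 1$.

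The natural candidate is a phase-corrected all-ones matrix. First note that $J := \sum_{ij}\ket{i}\bra{j} = d\,\proj{\psi^+}$, with $\ket{\psi^+}$ the maximally coherent state of Eq.~\eqref{eq:maxcoh}, is positive semidefinite and has unit diagonal, so $\Delta(J) = \I$ and $J$ is feasible. More generally, writing $U = \sum_j {\rm e}^{i\phi_j}\proj{j}$ for the diagonal unitary furnished by the hypothesis, the operator $X := U^\dagger J U$ is unitarily equivalent to $J$, whence $X \geq 0$, and conjugation by a diagonal unitary leaves every diagonal entry untouched, so $\Delta(X) = \I$ as well. Thus $X$ is a legitimate feasible point of \eqref{eq:SDPrestated}.

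It then remains only to compute the objective. Using $\Tr[U^\dagger J U \rho] = \Tr[J\, U\rho U^\dagger] = \Tr[J\rho']$ together with the elementary identity $\Tr[J\xi] = \sum_{ij}\xi_{ij}$ for any $\xi$, the hypothesis $\rho'_{ij} = |\rho_{ij}|$ gives $\Tr[X\rho] = \sum_{ij}|\rho_{ij}| = 1 + \CL(\rho)$ straight from the definition \eqref{eq:CL}. Hence $\CR(\rho) \geq \Tr[X\rho] - 1 = \CL(\rho)$, which combined with Theorem~\ref{thm:l1bound} yields \eqref{eq:Rex}. For the pure-state corollary I would simply observe that every $\ket{\psi} = \sum_j \psi_j\ket{j}$ meets the hypothesis: choosing $\phi_j = -\arg\psi_j$ makes $U\ket{\psi} = \sum_j |\psi_j|\ket{j}$, so $\rho' = U\proj{\psi}U^\dagger$ has entries $|\psi_i|\,|\psi_j| = |\rho_{ij}|$, and then $\CL(\proj{\psi}) = \sum_{ij}|\psi_i|\,|\psi_j| - 1 = \left(\sum_j|\psi_j|\right)^2 - 1$.

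I do not expect a genuine obstacle: the construction is explicit and feasibility is immediate. The only point requiring care is the phase alignment, and this is exactly what the hypothesis $\rho'_{ij} = |\rho_{ij}|$ supplies, preventing cancellations in $\sum_{ij}\rho'_{ij}$ so that the objective attains $\sum_{ij}|\rho_{ij}|$ rather than a smaller value. Equivalently, one may note that both $\CR$ and $\CL$ are invariant under conjugation by diagonal unitaries (the former because such maps are covariant, the latter because it depends only on the moduli of the entries), so one is free to assume from the outset that $\rho$ has nonnegative real entries and take $X = J$ directly.
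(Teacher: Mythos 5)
Your proposal is correct and is essentially the paper's own argument in dual clothing: your feasible point $X = U^\dagger J U = d\, U^\dagger \proj{\psi^+} U$ is precisely the operator against which the paper pairs $\rho$ when it conjugates the primal inequality $\rho \leq (1+\CR(\rho))\,\delta^\star$ by $U$ and takes the expectation value in $\ket{\psi^+}$, and your feasibility check $\Delta(X)=\I$ encodes the same computation as the paper's observation that $\bra{\psi^+}\delta^\star\ket{\psi^+}=1/d$. The only difference is packaging: you certify the lower bound by exhibiting one explicit feasible point of the dual SDP \eqref{eq:SDPrestated}, whereas the paper reaches the same inequality by re-running the lower-bound proof of Theorem~\ref{thm:l1bound} with the Hadamard product $\rho \circ M$ replaced by the conjugation $U\rho U^\dagger$; the pure-state specialization is handled identically in both.
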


\begin{proof}
We can invoke the bound \eqref{eq:l1bound} $\CR(\rho)\leq \CL(\rho)$. To prove that this is an equality under the conditions of the theorem, we will  tighten the general lower bound of Theorem~\ref{thm:l1bound}. Indeed, one can adapt the proof of the lower bound of Theorem~\ref{thm:l1bound}, by considering $\rho'=U\rho U^{\dagger}$, instead of $\rho'=\rho\circ M$,  so that $\bra{\psi^+} \rho' \ket{\psi^+}=\frac{1}{d}\sum_{ij}|\rho_{ij}|$. The last steps in the proof of the lower bound are then the same.

That a pure state $\ket{\psi}$ admits such  unitary it is clear: take
\[
U=\sum_j {\rm e}^{-i\phi_j}\proj{j},
\]
 where $\phi_j$ is the phase of the coefficients $\psi_j$, i.e., $\psi_j=|\psi_j|{\rm e}^{i\phi_j}$.
\end{proof}

It is clear that the bounds \eqref{eq:l1bound} give the exact value of $\CR$ for an arbitrary state of one qubit ($d=2$), for which $\CR$ is then equal to $2|\rho_{01}|$ \cite{PRL}. Also, a qubit state is one such that there exists a unitary $U$ as in Theorem~\ref{thm:exact}.

Of course, all states such that their entries in the reference basis are positive to begin with (that is, such that we can take $U=\I$ in Theorem~\ref{thm:exact}) satisfy $\CR=\CL$.
A simple class of states for which Theorem~\ref{thm:exact} holds less trivially are generalized X-states, of the form
\[
\rho
=
\begin{cases}
\sum_{j=0}^{\frac{d}{2}}\rho_{j}&\textrm{if $d$ is even;}\\
\sum_{j=0}^{\lfloor \frac{d}{2}\rfloor}\rho_{j}+\rho_c&\textrm{if $d$ is odd,}
\end{cases}
\]
with
\[
\begin{aligned}
\rho_j &=  \rho_{jj}\proj{j}+\rho_{j,d-1-j}\ket{j}\bra{d-1-j}\\
&\quad+\rho_{d-1-j,j}\ket{d-1-j}\bra{j}+\rho_{d-1-j,d-1-j}\proj{d-1-j}
\end{aligned}
\]
and
\[
\rho_c= \rho_{\lfloor d/2\rfloor+1}\proj{\lfloor d/2\rfloor+1}.
\]
Such a class comprises, in the bipartite case, all two-qubit X-states, a superclass of Bell diagonal states (see~\cite{rau2009algebraic} and references therein).


%
%

\subsection{Only maximally coherent states have maximal robustness of coherence}

Here we prove that the RoC is a measure of coherence whose maximal value can only be reached on pure maximally coherent states of the form (\ref{eq:maxcoh}). This is a desired property for a valid measure of coherence \cite{MaxCoh}.

\begin{theorem}
A state $\rho \in {\mathscr{D}}(\mathbb{C}^d)$ satisfies $\CR(\rho)=d-1$ if and only if $\CR(\rho)$ is a maximally coherent pure state.
\end{theorem}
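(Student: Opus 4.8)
The plan is to prove both directions by comparing the RoC with the $\ell_1$ norm of coherence $\CL$ and identifying the unique states that make the latter maximal. The starting observation is that, by Theorem~\ref{thm:l1bound}, $\CR(\rho)\le\CL(\rho)$, and that $\CL$ admits the universal upper bound $\CL(\rho)\le d-1$. Granting this, if $\CR(\rho)=d-1$ then $d-1=\CR(\rho)\le\CL(\rho)\le d-1$, so $\CL(\rho)=d-1$ as well; the problem thus reduces to characterizing exactly which states saturate $\CL$.

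For the sufficiency direction I would simply invoke Theorem~\ref{thm:exact}: if $\rho=\proj{\psi}$ is maximally coherent, i.e.\ $|\psi_j|=1/\sqrt d$ for all $j$, then $\CR(\rho)=\CL(\rho)=\big(\sum_j|\psi_j|\big)^2-1=(\sqrt d)^2-1=d-1$. For the necessity direction, the key is to prove $\CL(\rho)\le d-1$ together with its equality conditions. Using positivity, $|\rho_{ij}|\le\sqrt{\rho_{ii}\rho_{jj}}$, and Cauchy--Schwarz against $\sum_i\rho_{ii}=1$, I would establish
\[
\CL(\rho)=\sum_{i\ne j}|\rho_{ij}|\le\sum_{i\ne j}\sqrt{\rho_{ii}\rho_{jj}}=\Big(\sum_i\sqrt{\rho_{ii}}\Big)^2-1\le d-1.
\]
Since both inequalities must then be equalities, the second forces all diagonal entries to be equal, $\rho_{ii}=1/d$, and the first forces $|\rho_{ij}|=\sqrt{\rho_{ii}\rho_{jj}}$ for every pair $i\ne j$.

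The main obstacle is turning this last saturation condition into the statement that $\rho$ is pure. The idea is to read $|\rho_{ij}|=\sqrt{\rho_{ii}\rho_{jj}}$ as equality in the Cauchy--Schwarz inequality $|\langle w_i,w_j\rangle|\le\|w_i\|\,\|w_j\|$ for the vectors $w_i:=\rho^{1/2}\ket{i}$, whose Gram matrix is precisely $\rho$. Equality for all pairs forces all the $w_i$ to be mutually proportional; since the strictly positive diagonal $\rho_{ii}=1/d$ guarantees every $w_i\ne0$, the vectors all lie in a common one-dimensional subspace, so $\rho^{1/2}$ (and hence $\rho$) has rank one, i.e.\ $\rho=\proj{\psi}$. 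Finally $|\psi_j|^2=\rho_{jj}=1/d$ gives $|\psi_j|=1/\sqrt d$, so $\rho$ is maximally coherent, completing the equivalence.
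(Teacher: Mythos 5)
Your proof is correct, and it takes a genuinely different route from the paper's. The paper gets purity first: since the maximally mixed state is incoherent and $\rho \leq \|\rho\|_\infty \I = d\,\|\rho\|_\infty\,(\I/d)$, it obtains the bound $\CR(\rho) \leq d\,\|\rho\|_\infty - 1$, so $\CR(\rho) = d-1$ forces $\|\rho\|_\infty = 1$, i.e., $\rho$ pure; it then concludes by invoking Theorem~\ref{thm:exact} ($\CR=\CL$ on pure states) together with the fact, cited from Ref.~\cite{Baumgratz2014}, that only maximally coherent pure states attain the maximal $\ell_1$ coherence. You never look at $\|\rho\|_\infty$: from $\CR \leq \CL$ (Theorem~\ref{thm:l1bound}) and the chain $\CL(\rho)\leq \big(\sum_i \sqrt{\rho_{ii}}\big)^2 - 1 \leq d-1$, the hypothesis forces the two saturation conditions $\rho_{ii}=1/d$ and $|\rho_{ij}|=\sqrt{\rho_{ii}\rho_{jj}}$ for all $i\neq j$, and you extract purity from the latter via the Gram-vector argument: equality in Cauchy--Schwarz for the nonzero vectors $w_i=\rho^{1/2}\ket{i}$ (whose Gram matrix is $\rho$) makes them pairwise proportional, so the range of $\rho$ is one-dimensional and $\rho$ has rank one. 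Your steps all check out, including the equality analysis (Cauchy--Schwarz against the all-ones vector forces equal diagonal entries, and termwise saturation of $|\rho_{ij}|\leq\sqrt{\rho_{ii}\rho_{jj}}$ follows because a sum of nonnegative gaps vanishes). As for what each approach buys: yours is self-contained, reproving rather than citing the equality case of $\CL\leq d-1$ --- in fact you establish the stronger statement that $\CL(\rho)=d-1$ alone already forces $\rho$ to be a maximally coherent pure state --- while the paper's is shorter, leans on established results, and produces the independently useful intermediate bound $\CR(\rho)\leq d\,\|\rho\|_\infty -1$. The ``if'' direction is handled identically in both, via Theorem~\ref{thm:exact}.
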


\begin{proof} With a more careful analysis of the proof of Property~\ref{A1}, we observe that for every state $\rho$ it holds that
\[
\rho\leq\|\rho\|_\infty \I = d\|\rho\|_\infty \frac{\I}{d},
\]
where $\|\rho\|_\infty \leq 1$ is the largest eigenvalue of $\rho$. Since the maximally mixed state $\frac{\I}{d}$ is incoherent, this implies
\[
\CR(\rho) \leq d\|\rho\|_\infty - 1.
\]
If $\CR(\rho) = d-1$, the just found inequality implies $\|\rho\|_\infty =1$, that is, that $\rho$ is a pure state.
We can now invoke the result that for pure states the RoC is equal to the $\ell_1$ norm of coherence (Theorem~{thm:exact}), and the fact that the only pure states with maximal $\ell_1$ norm of coherence are are the maximally coherent states \cite{Baumgratz2014}, to conclude the proof.
\end{proof}

\section{Conclusions}

The importance of symmetry in physics can hardly be overestimated. A quantum state may or may not respect a given symmetry; in the latter case one says that the state is asymmetric. With the advent of quantum information processing and its operational approach to quantum features, the asymmetry of states has been investigated more rigorously and elevated to the status of a resource. In this paper, which also acts as a companion for the Letter~\cite{PRL}, we have introduced explicitly the robustness of asymmetry, a quantifier of asymmetry that has been shown to possess several desirable properties, including: a defining operational interpretation as resilience against noise of the asymmetry present in a given state; a further operational characterization in the context of of channel discrimination, in particular in terms of the advantage that an asymmetric state can provide in the discrimination of the channels that realize the representation of the symmetry group under consideration; an efficient numerical evaluation via semidefinite programming, once the state is known; the possibility to be measured or at least estimated directly experimentally, hence yielding a convenient benchmark for non-classicality in disparate physical scenarios. Furthermore, the robustness of asymmetry has been shown to be an asymmetry monotone in a strong sense, hence it can be employed as quantifier of asymmetry in a variety of resource theoretic frameworks in which, while the notion of free states identified as symmetric states remains the same, the notion of free operations adopted to manipulate asymmetry may differ.

Quantum coherence can be considered, from an operational point of view, as a special case of asymmetry. Consequently, all the tools we introduced and developed for the study of asymmetry immediately specialize to coherence, including in particular the notions of robustness of coherence and of coherence witnesses. While the relevance of the latter concepts is emphasized in the companion Letter~\cite{PRL}, in this paper we have provided full details and proofs for the claims made there.

As a service to the community, we provide numerical code for the evaluation and estimation of both the robustness of asymmetry and the robustness of coherence as Supplemental Material \cite{epapsA}. We expect that the concepts and tools---be them analytical or numerical---that we developed for the study of both asymmetry and, in particular, coherence, will be helpful for further theoretical developments and improved understanding of such fundamental concepts, and also for their experimental verification and benchmarking of quantum behaviour in physical and biological domains.

The present paper, together with \cite{PRL}, stands as further evidence that a modern quantum information approach to basic concepts, like symmetry/asymmetry and coherence, can shed further light on them and contribute to put them on solid qualitative and quantitative grounds, highlighting their role and usefulness in fundamental and technological applications.

\section*{Ackowledgements}

This work has received funding from the European Research Council (ERC StG GQCOP, Grant No.~637352),
from the European Union's Horizon 2020 research and innovation programme under the Marie Sklodowska-Curie (Grant Agreement No.~661338),
and the Erasmus+ Programme. We thank D.~Girolami, M.~Hall, I.~Marvian, M.~B.~Plenio, A.~Streltsov, and J.~Vaccaro for useful discussions.

\bibliographystyle{apsrevfixed}
\bibliography{corrub}

\begin{thebibliography}{53}%
\makeatletter
\providecommand \@ifxundefined [1]{%
 \@ifx{#1\undefined}
}%
\providecommand \@ifnum [1]{%
 \ifnum #1\expandafter \@firstoftwo
 \else \expandafter \@secondoftwo
 \fi
}%
\providecommand \@ifx [1]{%
 \ifx #1\expandafter \@firstoftwo
 \else \expandafter \@secondoftwo
 \fi
}%
\providecommand \natexlab [1]{#1}%
\providecommand \enquote  [1]{``#1''}%
\providecommand \bibnamefont  [1]{#1}%
\providecommand \bibfnamefont [1]{#1}%
\providecommand \citenamefont [1]{#1}%
\providecommand \href@noop [0]{\@secondoftwo}%
\providecommand \href [0]{\begingroup \@sanitize@url \@href}%
\providecommand \@href[1]{\@@startlink{#1}\@@href}%
\providecommand \@@href[1]{\endgroup#1\@@endlink}%
\providecommand \@sanitize@url [0]{\catcode `\\12\catcode `\$12\catcode
  `\&12\catcode `\#12\catcode `\^12\catcode `\_12\catcode `\%12\relax}%
\providecommand \@@startlink[1]{}%
\providecommand \@@endlink[0]{}%
\providecommand \url  [0]{\begingroup\@sanitize@url \@url }%
\providecommand \@url [1]{\endgroup\@href {#1}{\urlprefix }}%
\providecommand \urlprefix  [0]{URL }%
\providecommand \Eprint [0]{\href }%
\providecommand \doibase [0]{http://dx.doi.org/}%
\providecommand \selectlanguage [0]{\@gobble}%
\providecommand \bibinfo  [0]{\@secondoftwo}%
\providecommand \bibfield  [0]{\@secondoftwo}%
\providecommand \translation [1]{[#1]}%
\providecommand \BibitemOpen [0]{}%
\providecommand \bibitemStop [0]{}%
\providecommand \bibitemNoStop [0]{.\EOS\space}%
\providecommand \EOS [0]{\spacefactor3000\relax}%
\providecommand \BibitemShut  [1]{\csname bibitem#1\endcsname}%
\let\auto@bib@innerbib\@empty
\bibitem [{\citenamefont {Gross}(1996)}]{gross1996role}%
  \BibitemOpen
  \bibfield  {author} {\bibinfo {author} {\bibfnamefont {D.~J.}\ \bibnamefont
  {Gross}},\ }\href@noop {} {\bibfield  {journal} {\bibinfo  {journal}
  {Proceedings of the National Academy of Sciences of the United States of
  America}\ }\textbf {\bibinfo {volume} {93}},\ \bibinfo {pages} {14256}
  (\bibinfo {year} {1996})}\BibitemShut {NoStop}%
\bibitem [{\citenamefont {Vaccaro}\ \emph {et~al.}(2003)\citenamefont
  {Vaccaro}, \citenamefont {Anselmi},\ and\ \citenamefont
  {Wiseman}}]{Vaccaro2003}%
  \BibitemOpen
  \bibfield  {author} {\bibinfo {author} {\bibfnamefont {J.~A.}\ \bibnamefont
  {Vaccaro}}, \bibinfo {author} {\bibfnamefont {F.}~\bibnamefont {Anselmi}}, \
  and\ \bibinfo {author} {\bibfnamefont {H.~M.}\ \bibnamefont {Wiseman}},\
  }\href {\doibase10.1142/S0219749903000346} {\bibfield  {journal} {\bibinfo
  {journal} {Int. J. Quant. Inf.}\ }\textbf {\bibinfo {volume} {1}},\ \bibinfo
  {pages} {427} (\bibinfo {year} {2003})}\BibitemShut {NoStop}%
\bibitem [{\citenamefont {Bartlett}\ \emph {et~al.}(2007)\citenamefont
  {Bartlett}, \citenamefont {Rudolph},\ and\ \citenamefont
  {Spekkens}}]{Bartlett2007}%
  \BibitemOpen
  \bibfield  {author} {\bibinfo {author} {\bibfnamefont {S.~D.}\ \bibnamefont
  {Bartlett}}, \bibinfo {author} {\bibfnamefont {T.}~\bibnamefont {Rudolph}}, \
  and\ \bibinfo {author} {\bibfnamefont {R.~W.}\ \bibnamefont {Spekkens}},\
  }\href {\doibase10.1103/RevModPhys.79.555} {\bibfield  {journal} {\bibinfo
  {journal} {Rev. Mod. Phys.}\ }\textbf {\bibinfo {volume} {79}},\ \bibinfo
  {pages} {555} (\bibinfo {year} {2007})}\BibitemShut {NoStop}%
\bibitem [{\citenamefont {Vaccaro}\ \emph {et~al.}(2008)\citenamefont
  {Vaccaro}, \citenamefont {Anselmi}, \citenamefont {Wiseman},\ and\
  \citenamefont {Jacobs}}]{Vaccaro2008}%
  \BibitemOpen
  \bibfield  {author} {\bibinfo {author} {\bibfnamefont {J.~A.}\ \bibnamefont
  {Vaccaro}}, \bibinfo {author} {\bibfnamefont {F.}~\bibnamefont {Anselmi}},
  \bibinfo {author} {\bibfnamefont {H.~M.}\ \bibnamefont {Wiseman}}, \ and\
  \bibinfo {author} {\bibfnamefont {K.}~\bibnamefont {Jacobs}},\ }\href
  {\doibase10.1103/PhysRevA.77.032114} {\bibfield  {journal} {\bibinfo
  {journal} {Phys. Rev. A}\ }\textbf {\bibinfo {volume} {77}},\ \bibinfo
  {pages} {032114} (\bibinfo {year} {2008})}\BibitemShut {NoStop}%
\bibitem [{\citenamefont {Gour}\ and\ \citenamefont
  {Spekkens}(2008)}]{Gour2008}%
  \BibitemOpen
  \bibfield  {author} {\bibinfo {author} {\bibfnamefont {G.}~\bibnamefont
  {Gour}}\ and\ \bibinfo {author} {\bibfnamefont {R.~W.}\ \bibnamefont
  {Spekkens}},\ }\href {\doibase10.1088/1367-2630/10/3/033023} {\bibfield
  {journal} {\bibinfo  {journal} {New J. Phys.}\ }\textbf {\bibinfo {volume}
  {10}},\ \bibinfo {pages} {033023} (\bibinfo {year} {2008})}\BibitemShut
  {NoStop}%
\bibitem [{\citenamefont {Gour}\ \emph {et~al.}(2009)\citenamefont {Gour},
  \citenamefont {Marvian},\ and\ \citenamefont {Spekkens}}]{Gour2009}%
  \BibitemOpen
  \bibfield  {author} {\bibinfo {author} {\bibfnamefont {G.}~\bibnamefont
  {Gour}}, \bibinfo {author} {\bibfnamefont {I.}~\bibnamefont {Marvian}}, \
  and\ \bibinfo {author} {\bibfnamefont {R.~W.}\ \bibnamefont {Spekkens}},\
  }\href {\doibase10.1103/PhysRevA.80.012307} {\bibfield  {journal} {\bibinfo
  {journal} {Phys. Rev. A}\ }\textbf {\bibinfo {volume} {80}},\ \bibinfo
  {pages} {012307} (\bibinfo {year} {2009})}\BibitemShut {NoStop}%
\bibitem [{\citenamefont {Marvian}\ and\ \citenamefont
  {Spekkens}(2013)}]{Marvian2013}%
  \BibitemOpen
  \bibfield  {author} {\bibinfo {author} {\bibfnamefont {I.}~\bibnamefont
  {Marvian}}\ and\ \bibinfo {author} {\bibfnamefont {R.~W.}\ \bibnamefont
  {Spekkens}},\ }\href@noop {} {\bibfield  {journal} {\bibinfo  {journal} {New
  J. Phys.}\ }\textbf {\bibinfo {volume} {15}},\ \bibinfo {pages} {033001}
  (\bibinfo {year} {2013})}\BibitemShut {NoStop}%
\bibitem [{\citenamefont {Marvian}\ and\ \citenamefont
  {Spekkens}(2014)}]{Marvian2014a}%
  \BibitemOpen
  \bibfield  {author} {\bibinfo {author} {\bibfnamefont {I.}~\bibnamefont
  {Marvian}}\ and\ \bibinfo {author} {\bibfnamefont {R.~W.}\ \bibnamefont
  {Spekkens}},\ }\href {\doibase10.1038/ncomms4821} {\bibfield  {journal}
  {\bibinfo  {journal} {Nat. Commun.}\ }\textbf {\bibinfo {volume} {5}},\
  \bibinfo {pages} {3821} (\bibinfo {year} {2014})}\BibitemShut {NoStop}%
\bibitem [{\citenamefont {{Marvian}}\ \emph {et~al.}(2015)\citenamefont
  {{Marvian}}, \citenamefont {{Spekkens}},\ and\ \citenamefont
  {{Zanardi}}}]{Marvian2015}%
  \BibitemOpen
  \bibfield  {author} {\bibinfo {author} {\bibfnamefont {I.}~\bibnamefont
  {{Marvian}}}, \bibinfo {author} {\bibfnamefont {R.~W.}\ \bibnamefont
  {{Spekkens}}}, \ and\ \bibinfo {author} {\bibfnamefont {P.}~\bibnamefont
  {{Zanardi}}},\ }\href@noop {} {} (\bibinfo {year} {2015}),\ \Eprint
  {http://arxiv.org/abs/1510.06474} {arXiv:1510.06474} \BibitemShut {NoStop}%
\bibitem [{\citenamefont {Giovannetti}\ \emph {et~al.}(2011)\citenamefont
  {Giovannetti}, \citenamefont {Lloyd},\ and\ \citenamefont
  {Maccone}}]{Giovannetti2011}%
  \BibitemOpen
  \bibfield  {author} {\bibinfo {author} {\bibfnamefont {V.}~\bibnamefont
  {Giovannetti}}, \bibinfo {author} {\bibfnamefont {S.}~\bibnamefont {Lloyd}},
  \ and\ \bibinfo {author} {\bibfnamefont {L.}~\bibnamefont {Maccone}},\ }\href
  {\doibase10.1038/nphoton.2011.35} {\bibfield  {journal} {\bibinfo  {journal}
  {Nat. Photon.}\ }\textbf {\bibinfo {volume} {5}},\ \bibinfo {pages} {222}
  (\bibinfo {year} {2011})}\BibitemShut {NoStop}%
\bibitem [{\citenamefont {Hall}\ and\ \citenamefont
  {Wiseman}(2012)}]{Hall2012}%
  \BibitemOpen
  \bibfield  {author} {\bibinfo {author} {\bibfnamefont {M.~J.~W.}\
  \bibnamefont {Hall}}\ and\ \bibinfo {author} {\bibfnamefont {H.~M.}\
  \bibnamefont {Wiseman}},\ }\href {\doibase10.1103/PhysRevX.2.041006}
  {\bibfield  {journal} {\bibinfo  {journal} {Phys. Rev. X}\ }\textbf {\bibinfo
  {volume} {2}},\ \bibinfo {pages} {041006} (\bibinfo {year}
  {2012})}\BibitemShut {NoStop}%
\bibitem [{\citenamefont {Napoli}\ \emph {et~al.}(2016)\citenamefont {Napoli},
  \citenamefont {Bromley}, \citenamefont {Cianciaruso}, \citenamefont {Piani},
  \citenamefont {Johnston},\ and\ \citenamefont {Adesso}}]{PRL}%
  \BibitemOpen
  \bibfield  {author} {\bibinfo {author} {\bibfnamefont {C.}~\bibnamefont
  {Napoli}}, \bibinfo {author} {\bibfnamefont {T.~R.}\ \bibnamefont {Bromley}},
  \bibinfo {author} {\bibfnamefont {M.}~\bibnamefont {Cianciaruso}}, \bibinfo
  {author} {\bibfnamefont {M.}~\bibnamefont {Piani}}, \bibinfo {author}
  {\bibfnamefont {N.}~\bibnamefont {Johnston}}, \ and\ \bibinfo {author}
  {\bibfnamefont {G.}~\bibnamefont {Adesso}},\ }\href
  {\doibase10.1103/PhysRevLett.116.150502} {\bibfield  {journal} {\bibinfo
  {journal} {Phys. Rev. Lett.}\ }\textbf {\bibinfo {volume} {116}},\ \bibinfo
  {pages} {150502} (\bibinfo {year} {2016})}\BibitemShut {NoStop}%
\bibitem [{\citenamefont {\AA{}berg}(2006)}]{Aaberg2006}%
  \BibitemOpen
  \bibfield  {author} {\bibinfo {author} {\bibfnamefont {J.}~\bibnamefont
  {\AA{}berg}},\ }\href@noop {} {\enquote {\bibinfo {title} {Quantifying
  superposition},}\ } (\bibinfo {year} {2006}),\ \Eprint
  {http://arxiv.org/abs/quant-ph/0612146} {arXiv:quant-ph/0612146} \BibitemShut
  {NoStop}%
\bibitem [{\citenamefont {Lloyd}(2011)}]{Lloyd2011}%
  \BibitemOpen
  \bibfield  {author} {\bibinfo {author} {\bibfnamefont {S.}~\bibnamefont
  {Lloyd}},\ }\href@noop {} {\bibfield  {journal} {\bibinfo  {journal} {J.
  Phys.: Conf. Ser.}\ }\textbf {\bibinfo {volume} {302}},\ \bibinfo {pages}
  {012037} (\bibinfo {year} {2011})}\BibitemShut {NoStop}%
\bibitem [{\citenamefont {Huelga}\ and\ \citenamefont
  {Plenio}(2013)}]{Huelga2013}%
  \BibitemOpen
  \bibfield  {author} {\bibinfo {author} {\bibfnamefont {S.~F.}\ \bibnamefont
  {Huelga}}\ and\ \bibinfo {author} {\bibfnamefont {M.~B.}\ \bibnamefont
  {Plenio}},\ }\href@noop {} {\bibfield  {journal} {\bibinfo  {journal}
  {Contemp. Phys.}\ }\textbf {\bibinfo {volume} {54}},\ \bibinfo {pages} {181}
  (\bibinfo {year} {2013})}\BibitemShut {NoStop}%
\bibitem [{\citenamefont {Baumgratz}\ \emph {et~al.}(2014)\citenamefont
  {Baumgratz}, \citenamefont {Cramer},\ and\ \citenamefont
  {Plenio}}]{Baumgratz2014}%
  \BibitemOpen
  \bibfield  {author} {\bibinfo {author} {\bibfnamefont {T.}~\bibnamefont
  {Baumgratz}}, \bibinfo {author} {\bibfnamefont {M.}~\bibnamefont {Cramer}}, \
  and\ \bibinfo {author} {\bibfnamefont {M.~B.}\ \bibnamefont {Plenio}},\
  }\href {\doibase10.1103/PhysRevLett.113.140401} {\bibfield  {journal}
  {\bibinfo  {journal} {Phys. Rev. Lett.}\ }\textbf {\bibinfo {volume} {113}},\
  \bibinfo {pages} {140401} (\bibinfo {year} {2014})}\BibitemShut {NoStop}%
\bibitem [{\citenamefont {Grosshans}\ \emph {et~al.}(2003)\citenamefont
  {Grosshans}, \citenamefont {{Van Assche}}, \citenamefont {Wenger},
  \citenamefont {Brouri}, \citenamefont {Cerf},\ and\ \citenamefont
  {Grangier}}]{Grosshans2003}%
  \BibitemOpen
  \bibfield  {author} {\bibinfo {author} {\bibfnamefont {F.}~\bibnamefont
  {Grosshans}}, \bibinfo {author} {\bibfnamefont {G.}~\bibnamefont {{Van
  Assche}}}, \bibinfo {author} {\bibfnamefont {J.}~\bibnamefont {Wenger}},
  \bibinfo {author} {\bibfnamefont {R.}~\bibnamefont {Brouri}}, \bibinfo
  {author} {\bibfnamefont {N.~J.}\ \bibnamefont {Cerf}}, \ and\ \bibinfo
  {author} {\bibfnamefont {P.}~\bibnamefont {Grangier}},\ }\href
  {\doibase10.1038/nature01289} {\bibfield  {journal} {\bibinfo  {journal}
  {Nature}\ }\textbf {\bibinfo {volume} {421}},\ \bibinfo {pages} {238}
  (\bibinfo {year} {2003})}\BibitemShut {NoStop}%
\bibitem [{\citenamefont {Levi}\ and\ \citenamefont
  {Mintert}(2014)}]{Levi2014}%
  \BibitemOpen
  \bibfield  {author} {\bibinfo {author} {\bibfnamefont {F.}~\bibnamefont
  {Levi}}\ and\ \bibinfo {author} {\bibfnamefont {F.}~\bibnamefont {Mintert}},\
  }\href@noop {} {\bibfield  {journal} {\bibinfo  {journal} {New J. Phys.}\
  }\textbf {\bibinfo {volume} {16}},\ \bibinfo {pages} {033007} (\bibinfo
  {year} {2014})}\BibitemShut {NoStop}%
\bibitem [{\citenamefont {Streltsov}\ \emph {et~al.}(2015)\citenamefont
  {Streltsov}, \citenamefont {Singh}, \citenamefont {Dhar}, \citenamefont
  {Bera},\ and\ \citenamefont {Adesso}}]{Streltsov2015b}%
  \BibitemOpen
  \bibfield  {author} {\bibinfo {author} {\bibfnamefont {A.}~\bibnamefont
  {Streltsov}}, \bibinfo {author} {\bibfnamefont {U.}~\bibnamefont {Singh}},
  \bibinfo {author} {\bibfnamefont {H.~S.}\ \bibnamefont {Dhar}}, \bibinfo
  {author} {\bibfnamefont {M.~N.}\ \bibnamefont {Bera}}, \ and\ \bibinfo
  {author} {\bibfnamefont {G.}~\bibnamefont {Adesso}},\ }\href
  {\doibase10.1103/PhysRevLett.115.020403} {\bibfield  {journal} {\bibinfo
  {journal} {Phys. Rev. Lett.}\ }\textbf {\bibinfo {volume} {115}},\ \bibinfo
  {pages} {020403} (\bibinfo {year} {2015})}\BibitemShut {NoStop}%
\bibitem [{\citenamefont {Du}\ \emph {et~al.}(2015)\citenamefont {Du},
  \citenamefont {Bai},\ and\ \citenamefont {Guo}}]{Du2015}%
  \BibitemOpen
  \bibfield  {author} {\bibinfo {author} {\bibfnamefont {S.}~\bibnamefont
  {Du}}, \bibinfo {author} {\bibfnamefont {Z.}~\bibnamefont {Bai}}, \ and\
  \bibinfo {author} {\bibfnamefont {Y.}~\bibnamefont {Guo}},\ }\href
  {\doibase10.1103/PhysRevA.91.052120} {\bibfield  {journal} {\bibinfo
  {journal} {Phys. Rev. A}\ }\textbf {\bibinfo {volume} {91}},\ \bibinfo
  {pages} {052120} (\bibinfo {year} {2015})}\BibitemShut {NoStop}%
\bibitem [{\citenamefont {Cheng}\ and\ \citenamefont {Hall}(2015)}]{Hall2015}%
  \BibitemOpen
  \bibfield  {author} {\bibinfo {author} {\bibfnamefont {S.}~\bibnamefont
  {Cheng}}\ and\ \bibinfo {author} {\bibfnamefont {M.~J.~W.}\ \bibnamefont
  {Hall}},\ }\href {\doibase10.1103/PhysRevA.92.042101} {\bibfield  {journal}
  {\bibinfo  {journal} {Phys. Rev. A}\ }\textbf {\bibinfo {volume} {92}},\
  \bibinfo {pages} {042101} (\bibinfo {year} {2015})}\BibitemShut {NoStop}%
\bibitem [{\citenamefont {Singh}\ \emph {et~al.}(2015)\citenamefont {Singh},
  \citenamefont {Bera}, \citenamefont {Dhar},\ and\ \citenamefont
  {Pati}}]{India2015}%
  \BibitemOpen
  \bibfield  {author} {\bibinfo {author} {\bibfnamefont {U.}~\bibnamefont
  {Singh}}, \bibinfo {author} {\bibfnamefont {M.~N.}\ \bibnamefont {Bera}},
  \bibinfo {author} {\bibfnamefont {H.~S.}\ \bibnamefont {Dhar}}, \ and\
  \bibinfo {author} {\bibfnamefont {A.~K.}\ \bibnamefont {Pati}},\ }\href
  {\doibase10.1103/PhysRevA.91.052115} {\bibfield  {journal} {\bibinfo
  {journal} {Phys. Rev. A}\ }\textbf {\bibinfo {volume} {91}},\ \bibinfo
  {pages} {052115} (\bibinfo {year} {2015})}\BibitemShut {NoStop}%
\bibitem [{\citenamefont {Winter}\ and\ \citenamefont
  {Yang}(2016)}]{Winter2015}%
  \BibitemOpen
  \bibfield  {author} {\bibinfo {author} {\bibfnamefont {A.}~\bibnamefont
  {Winter}}\ and\ \bibinfo {author} {\bibfnamefont {D.}~\bibnamefont {Yang}},\
  }\href {\doibase10.1103/PhysRevLett.116.120404} {\bibfield  {journal}
  {\bibinfo  {journal} {Phys. Rev. Lett.}\ }\textbf {\bibinfo {volume} {116}},\
  \bibinfo {pages} {120404} (\bibinfo {year} {2016})}\BibitemShut {NoStop}%
\bibitem [{\citenamefont {Peng}\ \emph {et~al.}(2015)\citenamefont {Peng},
  \citenamefont {Jiang},\ and\ \citenamefont {Fan}}]{MaxCoh}%
  \BibitemOpen
  \bibfield  {author} {\bibinfo {author} {\bibfnamefont {Y.}~\bibnamefont
  {Peng}}, \bibinfo {author} {\bibfnamefont {Y.}~\bibnamefont {Jiang}}, \ and\
  \bibinfo {author} {\bibfnamefont {H.}~\bibnamefont {Fan}},\ }\href@noop {} {}
  (\bibinfo {year} {2015}),\ \Eprint {http://arxiv.org/abs/1511.02576}
  {arXiv:1511.02576} \BibitemShut {NoStop}%
\bibitem [{\citenamefont {Streltsov}(2015)}]{Streltsov2015}%
  \BibitemOpen
  \bibfield  {author} {\bibinfo {author} {\bibfnamefont {A.}~\bibnamefont
  {Streltsov}},\ }\href@noop {} {\enquote {\bibinfo {title} {Genuine quantum
  coherence},}\ } (\bibinfo {year} {2015}),\ \Eprint
  {http://arxiv.org/abs/1511.08346} {arXiv:1511.08346} \BibitemShut {NoStop}%
\bibitem [{\citenamefont {Coecke}\ \emph {et~al.}(2014)\citenamefont {Coecke},
  \citenamefont {Fritz},\ and\ \citenamefont {Spekkens}}]{Coecke2014}%
  \BibitemOpen
  \bibfield  {author} {\bibinfo {author} {\bibfnamefont {B.}~\bibnamefont
  {Coecke}}, \bibinfo {author} {\bibfnamefont {T.}~\bibnamefont {Fritz}}, \
  and\ \bibinfo {author} {\bibfnamefont {R.}~\bibnamefont {Spekkens}},\
  }\href@noop {} {\enquote {\bibinfo {title} {A mathematical theory of
  resources},}\ } (\bibinfo {year} {2014}),\ \Eprint
  {http://arxiv.org/abs/1409.5531} {arXiv:1409.5531} \BibitemShut {NoStop}%
\bibitem [{\citenamefont {{del Rio}}\ \emph {et~al.}(2015)\citenamefont {{del
  Rio}}, \citenamefont {Kr\"amer},\ and\ \citenamefont {Renner}}]{Renner2015}%
  \BibitemOpen
  \bibfield  {author} {\bibinfo {author} {\bibfnamefont {L.}~\bibnamefont {{del
  Rio}}}, \bibinfo {author} {\bibfnamefont {L.}~\bibnamefont {Kr\"amer}}, \
  and\ \bibinfo {author} {\bibfnamefont {R.}~\bibnamefont {Renner}},\
  }\href@noop {} {\enquote {\bibinfo {title} {Resource theories of
  knowledge},}\ } (\bibinfo {year} {2015}),\ \Eprint
  {http://arxiv.org/abs/1511.08818} {arXiv:1511.08818} \BibitemShut {NoStop}%
\bibitem [{\citenamefont {Brand\~ao}\ and\ \citenamefont
  {Gour}(2015)}]{Brandao2015}%
  \BibitemOpen
  \bibfield  {author} {\bibinfo {author} {\bibfnamefont {F.~G. S.~L.}\
  \bibnamefont {Brand\~ao}}\ and\ \bibinfo {author} {\bibfnamefont
  {G.}~\bibnamefont {Gour}},\ }\href {\doibase10.1103/PhysRevLett.115.070503}
  {\bibfield  {journal} {\bibinfo  {journal} {Phys. Rev. Lett.}\ }\textbf
  {\bibinfo {volume} {115}},\ \bibinfo {pages} {070503} (\bibinfo {year}
  {2015})}\BibitemShut {NoStop}%
\bibitem [{\citenamefont {Vedral}\ \emph {et~al.}(1997)\citenamefont {Vedral},
  \citenamefont {Plenio}, \citenamefont {Rippin},\ and\ \citenamefont
  {Knight}}]{Plenio1997}%
  \BibitemOpen
  \bibfield  {author} {\bibinfo {author} {\bibfnamefont {V.}~\bibnamefont
  {Vedral}}, \bibinfo {author} {\bibfnamefont {M.~B.}\ \bibnamefont {Plenio}},
  \bibinfo {author} {\bibfnamefont {M.~A.}\ \bibnamefont {Rippin}}, \ and\
  \bibinfo {author} {\bibfnamefont {P.~L.}\ \bibnamefont {Knight}},\ }\href
  {\doibase10.1103/PhysRevLett.78.2275} {\bibfield  {journal} {\bibinfo
  {journal} {Phys. Rev. Lett.}\ }\textbf {\bibinfo {volume} {78}},\ \bibinfo
  {pages} {2275} (\bibinfo {year} {1997})}\BibitemShut {NoStop}%
\bibitem [{\citenamefont {Horodecki}\ \emph {et~al.}(2009)\citenamefont
  {Horodecki}, \citenamefont {Horodecki}, \citenamefont {Horodecki},\ and\
  \citenamefont {Horodecki}}]{Horodecki2009}%
  \BibitemOpen
  \bibfield  {author} {\bibinfo {author} {\bibfnamefont {R.}~\bibnamefont
  {Horodecki}}, \bibinfo {author} {\bibfnamefont {P.}~\bibnamefont
  {Horodecki}}, \bibinfo {author} {\bibfnamefont {M.}~\bibnamefont
  {Horodecki}}, \ and\ \bibinfo {author} {\bibfnamefont {K.}~\bibnamefont
  {Horodecki}},\ }\href {\doibase10.1103/RevModPhys.81.865} {\bibfield
  {journal} {\bibinfo  {journal} {Rev. Mod. Phys.}\ }\textbf {\bibinfo {volume}
  {81}},\ \bibinfo {pages} {865} (\bibinfo {year} {2009})}\BibitemShut
  {NoStop}%
\bibitem [{\citenamefont {Bennett}\ \emph {et~al.}(1999)\citenamefont
  {Bennett}, \citenamefont {DiVincenzo}, \citenamefont {Fuchs}, \citenamefont
  {Mor}, \citenamefont {Rains}, \citenamefont {Shor}, \citenamefont {Smolin},\
  and\ \citenamefont {Wootters}}]{bennettnonlocality}%
  \BibitemOpen
  \bibfield  {author} {\bibinfo {author} {\bibfnamefont {C.~H.}\ \bibnamefont
  {Bennett}}, \bibinfo {author} {\bibfnamefont {D.~P.}\ \bibnamefont
  {DiVincenzo}}, \bibinfo {author} {\bibfnamefont {C.~A.}\ \bibnamefont
  {Fuchs}}, \bibinfo {author} {\bibfnamefont {T.}~\bibnamefont {Mor}}, \bibinfo
  {author} {\bibfnamefont {E.}~\bibnamefont {Rains}}, \bibinfo {author}
  {\bibfnamefont {P.~W.}\ \bibnamefont {Shor}}, \bibinfo {author}
  {\bibfnamefont {J.~A.}\ \bibnamefont {Smolin}}, \ and\ \bibinfo {author}
  {\bibfnamefont {W.~K.}\ \bibnamefont {Wootters}},\ }\href
  {\doibase10.1103/PhysRevA.59.1070} {\bibfield  {journal} {\bibinfo  {journal}
  {Phys. Rev. A}\ }\textbf {\bibinfo {volume} {59}},\ \bibinfo {pages} {1070}
  (\bibinfo {year} {1999})}\BibitemShut {NoStop}%
\bibitem [{\citenamefont {Vidal}(2000)}]{Vidal2000}%
  \BibitemOpen
  \bibfield  {author} {\bibinfo {author} {\bibfnamefont {G.}~\bibnamefont
  {Vidal}},\ }\href {\doibase10.1080/09500340008244048} {\bibfield  {journal}
  {\bibinfo  {journal} {J. Mod. Opt.}\ }\textbf {\bibinfo {volume} {47}},\
  \bibinfo {pages} {355} (\bibinfo {year} {2000})}\BibitemShut {NoStop}%
\bibitem [{\citenamefont {Vidal}\ and\ \citenamefont
  {Tarrach}(1999)}]{Robustness}%
  \BibitemOpen
  \bibfield  {author} {\bibinfo {author} {\bibfnamefont {G.}~\bibnamefont
  {Vidal}}\ and\ \bibinfo {author} {\bibfnamefont {R.}~\bibnamefont
  {Tarrach}},\ }\href {\doibase10.1103/PhysRevA.59.141} {\bibfield  {journal}
  {\bibinfo  {journal} {Phys. Rev. A}\ }\textbf {\bibinfo {volume} {59}},\
  \bibinfo {pages} {141} (\bibinfo {year} {1999})}\BibitemShut {NoStop}%
\bibitem [{\citenamefont {Steiner}(2003)}]{GenRobustness}%
  \BibitemOpen
  \bibfield  {author} {\bibinfo {author} {\bibfnamefont {M.}~\bibnamefont
  {Steiner}},\ }\href {\doibase10.1103/PhysRevA.67.054305} {\bibfield
  {journal} {\bibinfo  {journal} {Phys. Rev. A}\ }\textbf {\bibinfo {volume}
  {67}},\ \bibinfo {pages} {054305} (\bibinfo {year} {2003})}\BibitemShut
  {NoStop}%
\bibitem [{\citenamefont {Piani}\ and\ \citenamefont
  {Watrous}(2015)}]{Piani2015}%
  \BibitemOpen
  \bibfield  {author} {\bibinfo {author} {\bibfnamefont {M.}~\bibnamefont
  {Piani}}\ and\ \bibinfo {author} {\bibfnamefont {J.}~\bibnamefont
  {Watrous}},\ }\href {\doibase10.1103/PhysRevLett.114.060404} {\bibfield
  {journal} {\bibinfo  {journal} {Phys. Rev. Lett.}\ }\textbf {\bibinfo
  {volume} {114}},\ \bibinfo {pages} {060404} (\bibinfo {year}
  {2015})}\BibitemShut {NoStop}%
\bibitem [{\citenamefont {Geller}\ and\ \citenamefont
  {Piani}(2014)}]{GellerPiani}%
  \BibitemOpen
  \bibfield  {author} {\bibinfo {author} {\bibfnamefont {J.}~\bibnamefont
  {Geller}}\ and\ \bibinfo {author} {\bibfnamefont {M.}~\bibnamefont {Piani}},\
  }\href@noop {} {\bibfield  {journal} {\bibinfo  {journal} {J. Phys. A: Math.
  Theor.}\ }\textbf {\bibinfo {volume} {47}},\ \bibinfo {pages} {424030}
  (\bibinfo {year} {2014})}\BibitemShut {NoStop}%
\bibitem [{\citenamefont {Vandenberghe}\ and\ \citenamefont
  {Boyd}(1996)}]{vandenberghe1996semidefinite}%
  \BibitemOpen
  \bibfield  {author} {\bibinfo {author} {\bibfnamefont {L.}~\bibnamefont
  {Vandenberghe}}\ and\ \bibinfo {author} {\bibfnamefont {S.}~\bibnamefont
  {Boyd}},\ }\href@noop {} {\bibfield  {journal} {\bibinfo  {journal} {SIAM
  review}\ }\textbf {\bibinfo {volume} {38}},\ \bibinfo {pages} {49} (\bibinfo
  {year} {1996})}\BibitemShut {NoStop}%
\bibitem [{epa()}]{epapsA}%
  \BibitemOpen
  \href@noop {} {}\bibinfo {note} {The Supplemental Material contains Matlab
  code to evaluate the robustness of asymmetry and the robustness of coherence
  for arbitrary quantum states, as well as from experimental data.}\BibitemShut
  {Stop}%
\bibitem [{\citenamefont {MATLAB}(2015)}]{MATLAB:2015}%
  \BibitemOpen
  \bibfield  {author} {\bibinfo {author} {\bibnamefont {MATLAB}},\ }\href@noop
  {} {\emph {\bibinfo {title} {version 8.5.0 (R2015a)}}}\ (\bibinfo
  {publisher} {The MathWorks Inc.},\ \bibinfo {address} {Natick,
  Massachusetts},\ \bibinfo {year} {2015})\BibitemShut {NoStop}%
\bibitem [{\citenamefont {Grant}\ and\ \citenamefont {Boyd}(2014)}]{cvx}%
  \BibitemOpen
  \bibfield  {author} {\bibinfo {author} {\bibfnamefont {M.}~\bibnamefont
  {Grant}}\ and\ \bibinfo {author} {\bibfnamefont {S.}~\bibnamefont {Boyd}},\
  }\href@noop {} {\enquote {\bibinfo {title} {{CVX}: Matlab software for
  disciplined convex programming, version 2.1},}\ }\bibinfo {howpublished}
  {\url{http://cvxr.com/cvx}} (\bibinfo {year} {2014})\BibitemShut {NoStop}%
\bibitem [{\citenamefont {Grant}\ and\ \citenamefont {Boyd}(2008)}]{gb08}%
  \BibitemOpen
  \bibfield  {author} {\bibinfo {author} {\bibfnamefont {M.}~\bibnamefont
  {Grant}}\ and\ \bibinfo {author} {\bibfnamefont {S.}~\bibnamefont {Boyd}},\
  }in\ \href@noop {} {\emph {\bibinfo {booktitle} {Recent Advances in Learning
  and Control}}},\ \bibinfo {series and number} {Lecture Notes in Control and
  Information Sciences},\ \bibinfo {editor} {edited by\ \bibinfo {editor}
  {\bibfnamefont {V.}~\bibnamefont {Blondel}}, \bibinfo {editor} {\bibfnamefont
  {S.}~\bibnamefont {Boyd}}, \ and\ \bibinfo {editor} {\bibfnamefont
  {H.}~\bibnamefont {Kimura}}}\ (\bibinfo  {publisher} {Springer-Verlag
  Limited},\ \bibinfo {year} {2008})\ pp.\ \bibinfo {pages} {95--110},\
  \bibinfo {note} {\url{http://stanford.edu/~boyd/graph_dcp.html}}\BibitemShut
  {NoStop}%
\bibitem [{\citenamefont {Watrous}(2009)}]{watrous2009}%
  \BibitemOpen
  \bibfield  {author} {\bibinfo {author} {\bibfnamefont {J.}~\bibnamefont
  {Watrous}},\ }\href {\doibase10.4086/toc.2009.v005a011} {\bibfield  {journal}
  {\bibinfo  {journal} {Theory of Computing}\ }\textbf {\bibinfo {volume}
  {5}},\ \bibinfo {pages} {217} (\bibinfo {year} {2009})}\BibitemShut {NoStop}%
\bibitem [{\citenamefont {Audenaert}\ and\ \citenamefont
  {Plenio}(2006)}]{auden2006}%
  \BibitemOpen
  \bibfield  {author} {\bibinfo {author} {\bibfnamefont {K.~M.~R.}\
  \bibnamefont {Audenaert}}\ and\ \bibinfo {author} {\bibfnamefont {M.~B.}\
  \bibnamefont {Plenio}},\ }\href
  {http://stacks.iop.org/1367-2630/8/i=11/a=266} {\bibfield  {journal}
  {\bibinfo  {journal} {New Journal of Physics}\ }\textbf {\bibinfo {volume}
  {8}},\ \bibinfo {pages} {266} (\bibinfo {year} {2006})}\BibitemShut {NoStop}%
\bibitem [{\citenamefont {Brand\~ao}(2005)}]{Brandao2005}%
  \BibitemOpen
  \bibfield  {author} {\bibinfo {author} {\bibfnamefont {F.~G. S.~L.}\
  \bibnamefont {Brand\~ao}},\ }\href {\doibase10.1103/PhysRevA.72.022310}
  {\bibfield  {journal} {\bibinfo  {journal} {Phys. Rev. A}\ }\textbf {\bibinfo
  {volume} {72}},\ \bibinfo {pages} {022310} (\bibinfo {year}
  {2005})}\BibitemShut {NoStop}%
\bibitem [{\citenamefont {Eisert}\ \emph {et~al.}(2007)\citenamefont {Eisert},
  \citenamefont {Brand{\~a}o},\ and\ \citenamefont {Audenaert}}]{QWitness1}%
  \BibitemOpen
  \bibfield  {author} {\bibinfo {author} {\bibfnamefont {J.}~\bibnamefont
  {Eisert}}, \bibinfo {author} {\bibfnamefont {F.~G.}\ \bibnamefont
  {Brand{\~a}o}}, \ and\ \bibinfo {author} {\bibfnamefont {K.~M.}\ \bibnamefont
  {Audenaert}},\ }\href {\doibase10.1088/1367-2630/9/3/046} {\bibfield
  {journal} {\bibinfo  {journal} {New J. Phys.}\ }\textbf {\bibinfo {volume}
  {9}},\ \bibinfo {pages} {46} (\bibinfo {year} {2007})}\BibitemShut {NoStop}%
\bibitem [{\citenamefont {G\"uhne}\ \emph {et~al.}(2007)\citenamefont
  {G\"uhne}, \citenamefont {Reimpell},\ and\ \citenamefont
  {Werner}}]{QWitness2}%
  \BibitemOpen
  \bibfield  {author} {\bibinfo {author} {\bibfnamefont {O.}~\bibnamefont
  {G\"uhne}}, \bibinfo {author} {\bibfnamefont {M.}~\bibnamefont {Reimpell}}, \
  and\ \bibinfo {author} {\bibfnamefont {R.~F.}\ \bibnamefont {Werner}},\
  }\href {\doibase10.1103/PhysRevLett.98.110502} {\bibfield  {journal}
  {\bibinfo  {journal} {Phys. Rev. Lett.}\ }\textbf {\bibinfo {volume} {98}},\
  \bibinfo {pages} {110502} (\bibinfo {year} {2007})}\BibitemShut {NoStop}%
\bibitem [{\citenamefont {G\"uhne}\ \emph {et~al.}(2008)\citenamefont
  {G\"uhne}, \citenamefont {Reimpell},\ and\ \citenamefont
  {Werner}}]{QWitness3}%
  \BibitemOpen
  \bibfield  {author} {\bibinfo {author} {\bibfnamefont {O.}~\bibnamefont
  {G\"uhne}}, \bibinfo {author} {\bibfnamefont {M.}~\bibnamefont {Reimpell}}, \
  and\ \bibinfo {author} {\bibfnamefont {R.~F.}\ \bibnamefont {Werner}},\
  }\href {\doibase10.1103/PhysRevA.77.052317} {\bibfield  {journal} {\bibinfo
  {journal} {Phys. Rev. A}\ }\textbf {\bibinfo {volume} {77}},\ \bibinfo
  {pages} {052317} (\bibinfo {year} {2008})}\BibitemShut {NoStop}%
\bibitem [{\citenamefont {G{\"u}hne}\ and\ \citenamefont
  {T{\'o}th}(2009)}]{GuhneToth}%
  \BibitemOpen
  \bibfield  {author} {\bibinfo {author} {\bibfnamefont {O.}~\bibnamefont
  {G{\"u}hne}}\ and\ \bibinfo {author} {\bibfnamefont {G.}~\bibnamefont
  {T{\'o}th}},\ }\href@noop {} {\bibfield  {journal} {\bibinfo  {journal}
  {Physics Reports}\ }\textbf {\bibinfo {volume} {474}},\ \bibinfo {pages} {1}
  (\bibinfo {year} {2009})}\BibitemShut {NoStop}%
\bibitem [{\citenamefont {Ekert}\ \emph {et~al.}(2002)\citenamefont {Ekert},
  \citenamefont {Alves}, \citenamefont {Oi}, \citenamefont {Horodecki},
  \citenamefont {Horodecki},\ and\ \citenamefont {Kwek}}]{ekert2002direct}%
  \BibitemOpen
  \bibfield  {author} {\bibinfo {author} {\bibfnamefont {A.~K.}\ \bibnamefont
  {Ekert}}, \bibinfo {author} {\bibfnamefont {C.~M.}\ \bibnamefont {Alves}},
  \bibinfo {author} {\bibfnamefont {D.~K.}\ \bibnamefont {Oi}}, \bibinfo
  {author} {\bibfnamefont {M.}~\bibnamefont {Horodecki}}, \bibinfo {author}
  {\bibfnamefont {P.}~\bibnamefont {Horodecki}}, \ and\ \bibinfo {author}
  {\bibfnamefont {L.~C.}\ \bibnamefont {Kwek}},\ }\href@noop {} {\bibfield
  {journal} {\bibinfo  {journal} {Physical review letters}\ }\textbf {\bibinfo
  {volume} {88}},\ \bibinfo {pages} {217901} (\bibinfo {year}
  {2002})}\BibitemShut {NoStop}%
\bibitem [{\citenamefont {Gross}\ \emph {et~al.}(2007)\citenamefont {Gross},
  \citenamefont {Audenaert},\ and\ \citenamefont {Eisert}}]{gross2007}%
  \BibitemOpen
  \bibfield  {author} {\bibinfo {author} {\bibfnamefont {D.}~\bibnamefont
  {Gross}}, \bibinfo {author} {\bibfnamefont {K.}~\bibnamefont {Audenaert}}, \
  and\ \bibinfo {author} {\bibfnamefont {J.}~\bibnamefont {Eisert}},\ }\href
  {http://scitation.aip.org/content/aip/journal/jmp/48/5/10.1063/1.2716992}
  {\bibfield  {journal} {\bibinfo  {journal} {Journal of Mathematical Physics}\
  }\textbf {\bibinfo {volume} {48}},\ \bibinfo {eid} {052104} (\bibinfo {year}
  {2007})}\BibitemShut {NoStop}%
\bibitem [{\citenamefont {Yadin}\ \emph {et~al.}(2015)\citenamefont {Yadin},
  \citenamefont {Ma}, \citenamefont {Girolami}, \citenamefont {Gu},\ and\
  \citenamefont {Vedral}}]{YadinG2015}%
  \BibitemOpen
  \bibfield  {author} {\bibinfo {author} {\bibfnamefont {B.}~\bibnamefont
  {Yadin}}, \bibinfo {author} {\bibfnamefont {J.}~\bibnamefont {Ma}}, \bibinfo
  {author} {\bibfnamefont {D.}~\bibnamefont {Girolami}}, \bibinfo {author}
  {\bibfnamefont {M.}~\bibnamefont {Gu}}, \ and\ \bibinfo {author}
  {\bibfnamefont {V.}~\bibnamefont {Vedral}},\ }\href@noop {} {\enquote
  {\bibinfo {title} {Quantum processes which do not use coherence},}\ }
  (\bibinfo {year} {2015}),\ \Eprint {http://arxiv.org/abs/1512.02085}
  {arXiv:1512.02085} \BibitemShut {NoStop}%
\bibitem [{\citenamefont {Harrow}\ and\ \citenamefont
  {Nielsen}(2003)}]{harrow2003}%
  \BibitemOpen
  \bibfield  {author} {\bibinfo {author} {\bibfnamefont {A.~W.}\ \bibnamefont
  {Harrow}}\ and\ \bibinfo {author} {\bibfnamefont {M.~A.}\ \bibnamefont
  {Nielsen}},\ }\href {\doibase10.1103/PhysRevA.68.012308} {\bibfield
  {journal} {\bibinfo  {journal} {Phys. Rev. A}\ }\textbf {\bibinfo {volume}
  {68}},\ \bibinfo {pages} {012308} (\bibinfo {year} {2003})}\BibitemShut
  {NoStop}%
\bibitem [{\citenamefont {Rau}(2009)}]{rau2009algebraic}%
  \BibitemOpen
  \bibfield  {author} {\bibinfo {author} {\bibfnamefont {A.}~\bibnamefont
  {Rau}},\ }\href@noop {} {\bibfield  {journal} {\bibinfo  {journal} {J. Phys.
  A: Math. Theor.}\ }\textbf {\bibinfo {volume} {42}},\ \bibinfo {pages}
  {412002} (\bibinfo {year} {2009})}\BibitemShut {NoStop}%
\end{thebibliography}%

\end{document}